%%%%%%%%%%%%%%%%%%%%%%%%%%%%%%%%%%%%%%%%%%%%%%%%%%%%%%%%%%%%%%%%%%%%%%%%%%%
%% Trim Size: 9.75in x 6.5in
%% Text Area: 8in (include Runningheads) x 5in
%% ws-m3as.tex   :   28-9-2018
%% Tex file to use with ws-m3as.cls written in Latex2E.
%% The content, structure, format and layout of this style file is the
%% property of World Scientific Publishing Co. Pte. Ltd.
%% Copyright 2018 by World Scientific Publishing Co.
%% All rights are reserved.
%%%%%%%%%%%%%%%%%%%%%%%%%%%%%%%%%%%%%%%%%%%%%%%%%%%%%%%%%%%%%%%%%%%%%%%%%%%%
%

\documentclass{ws-m3as}

\usepackage{babel}
\usepackage{float}
\usepackage{subfig}
\usepackage[T1]{fontenc}
\usepackage{color}
\usepackage{mathrsfs}
\usepackage{mathtools}
\usepackage{amsmath}
\usepackage{amssymb}
\usepackage{graphicx}
\usepackage[linesnumbered,ruled,vlined]{algorithm2e}
\usepackage[nolists,nofigures,notables,tablesfirst]{endfloat}
\usepackage[super, compress, comma]{natbib}
\usepackage{chngcntr}
\usepackage{ragged2e}

\counterwithout{figure}{section}
\hyphenation{trans-portation}

\begin{document}

\markboth{Paolo Bartesaghi and Ernesto Estrada}{Where to cut to delay a pandemic with minimum disruption?}

%%%%%%%%%%%%%%%%%%% Publisher's Area please ignore %%%%%%%%%%%%%%%%%%%%%%%
%
%\catchline{}{}{}{}{}
%
%%%%%%%%%%%%%%%%%%%%%%%%%%%%%%%%%%%%%%%%%%%%%%%%%%%%%%%%%%%%%%%%%%%%%%%%%%

\title{Where to cut to delay a pandemic with minimum disruption? Mathematical analysis based on the SIS model}

\author{Paolo Bartesaghi}

\address{Department of Statistics and Quantitative Methods,\\
	University of Milano - Bicocca, Via Bicocca degli Arcimboldi 8, 20126, Milano, Italy.\\
	paolo.bartesaghi@unimib.it}

\author{Ernesto Estrada\footnote{Corresponding author: Ernesto Estrada, email: estrada@ifisc.uib-csic.es}}

\address{Institute of Mathematics and Applications, University of Zaragoza,\\
	     Pedro Cerbuna 12, Zaragoza 50009, Spain;\\
	     ARAID Foundation,\\ Government
	     of Aragón, Spain; Institute for Cross-Disciplinary Physics and Complex
	     Systems (IFISC, UIB-CSIC),\\
	     Campus Universitat de les Illes Balears E-07122, Palma de Mallorca, Spain.\\
	     estrada@ifisc.uib-csic.es}

\maketitle

\begin{history}
%\received{(Day Month Year)}
%\revised{(Day Month Year)}
\accepted{July 16, 2021}
%\comby{(xxxxxxxxxx)}
\end{history}

\begin{abstract}
We consider the problem of modifying a network topology in such a way as to delay the
propagation of a disease with minimal disruption of the network capacity
to reroute goods/items/passengers. We find an approximate
solution to the Susceptible-Infected-Susceptible (SIS) model, which
constitutes a tight upper bound to its exact solution. This upper
bound allows direct structure-epidemic dynamic relations via the total
communicability function. Using this approach we propose a strategy
to remove edges in a network that significantly delays the propagation
of a disease across the network with minimal disruption of its capacity
to deliver goods/items/passengers. We apply this strategy to the analysis
of the U.K. airport transportation network weighted by the number
of passengers transported in the year 2003. We find that the removal
of all flights connecting four origin-destination pairs in the U.K.
delays the propagation of a disease by more than 300\%, with a minimal
deterioration of the transportation capacity of this network. These
time delays in the propagation of a disease represent an important
non-pharmaceutical intervention to confront an epidemics, allowing
for better preparations of the health systems, while keeping the economy
moving with minimal disruptions. 
\end{abstract}

\keywords{Networks Theory; SIS Model; Communicability}

\ccode{AMS Subject Classification:  92D39; 05C82, 37N25}

\section{Introduction}

Epidemics propagate through networks \cite{networks_epidemics}. They
include international transportation webs \cite{epidemics_airports},
nationwide and urban commuting systems \cite{epidemics_commuting},
as well as face-to-face networks of human contacts \cite{epidemics_face_to_face}.
Temporarily disrupting these networks is the first election to avoid
the propagation of an epidemic from local to global scales \cite{edge_removal_3,edge_removal_2,edge_removal_1}.
The problem is that these networks also move our economy and society.
Arguably a network exists to transport ``something'' among its nodes.
Therefore, the disruption of transportation networks affects the flow
of raw materials needed for production, of goods needed for consumption
and of people who directly or indirectly participate in the economic
life of modern society. During the COVID-19 pandemic \cite{COVID-19_3,COVID-19_2},
expanding from Wuhan in China to the rest of the globe since February
2020, both human and economic damages have been catastrophic across
the world \cite{COVID-19_damages_4,COVID-19_damages_1,COVID-19_damages_2,COVID-19_damages_3}.

One lesson learned from this pandemic is that sometimes delaying the
propagation of the infection a few days allows for a better preparation
of the health services which impact significantly in saving lives
\cite{delaying_3,delaying_1,delaying_2}. In order to delay such propagation
we have to act directly over the networks which the disease uses to
expand and, in this sense, mathematical modeling has played an important
role \cite{modeling_1,modeling_4,modeling_2,modeling_3}. In practice,
we can cut some of the connections between the different nodes of
these networks with the hope of delaying the pandemic. Typically,
we are talking about canceling or reducing international and national
flights, isolating regions of a country and/or neighborhoods of a
city, and/or limiting the sizes of social groups allowed \cite{delaying_3,delaying_1,delaying_2}.
The question is then: ``Where to cut?'', thinking simultaneously
in delaying the pandemic and not affecting dramatically the flow of
goods/items/passengers through the network.

Let us illustrate this situation with a toy example.
In the next section we will motivate the use of the "Suceptible-Infected-Susceptible" (SIS) model in the context of diseases propagating on an airport network. Then, let us consider a disease propagating across an airport network like the one illustrated in Fig. \ref{Toy network}. In this scenario it would be tempting to cut
the edge between the nodes 1 and 12 to delay the propagation. Indeed,
the time at which the whole network is infected in a SIS \cite{kiss2017mathematics,Bullo,SzaboSimon2014} scenario is delayed by 9.45\%
with respect to the original network, but you have increased the average
shortest path distance by 36.3\%, making the network much inefficient.
If instead you decided to cut 5-6, you increase the SIS time of global
infection by 8.92\% but also the average shortest path by 25.2\%.
In contrast, removing the edge (2,3) increases the SIS time of global
infection by 3.15\% with an increase of only 0.59\% in the average
shortest path.

\begin{figure}[H]
	\begin{centering}
		\includegraphics[width=0.55\textwidth]{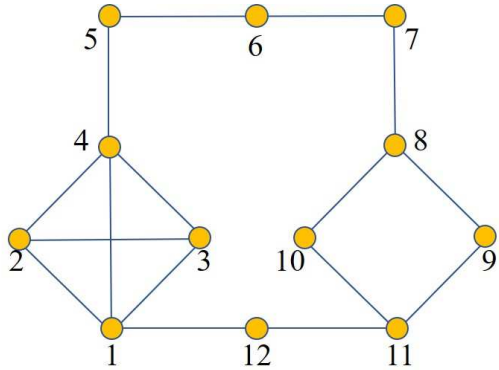}
		\par\end{centering}
	\caption{A toy network illustrating two communities (squares) connected by
		two paths of 2 and 4 edges, respectively. The toy model is used to
		illustrate the strategies of delaying disease propagation with minimum
		connectivity cost by cutting strategic edges in graphs.}
	
	\label{Toy network}
\end{figure}

The concepts stated before are clarified in the next sections of this
paper, but they are used here to exemplify the lack of triviality
of the problem in question. After the definition of all concepts
and notation used in this work, we state and prove the main result,
namely an approximate SIS model whose solution represents
an upper bound to the exact solution of this model, which is always below
the diverging solution of the linearized model. One of the main advantages
of this upper bound, apart from representing a worse case scenario
for the propagation of a SIS disease, is that it directly connects
the structure of the network with the disease dynamics. That is, we
show here that the upper bound found here for the SIS model can be
expressed in terms of an exponential function of the adjacency matrix,
which is known to capture the contributions of subgraphs of a network
to a property, e.g., node infectivity, via the use of walks in graphs.
These functions, known nowadays as communicability functions, have
found many applications across the disciplines. Using our structural-transparent
upper bound to the SIS model we study the propagation of a disease
across the network of commercial airports in the U.K. We consider
a network of airports with edges weighed by the number of passengers
transported in year 2003. We devised here a strategy to remove edges
in a network which delays the disease propagation with minimum disruption
of the capacity of the network to reroute goods/items/passengers.
For the U.K. airport network we found that by removing 4 edges, i.e,
removing all flights connecting 4 pairs of origin-destination places,
a disease can be delayed by more than 300\% relative to the original
network without disruption of the network efficiency to diffuse goods/items/passengers
or to reroute them by shortest paths connections in the network.

\section{Motivations - SIS model and airport networks}

In this work we deal with the use of the SIS model, which is mathematically
described in the next section. The reasons why we consider SIS instead
of, for instance, the Susceptible-Infected-Recovered (SIR) model are
presented in this section. In the Introduction we have mentioned our
interest in modeling and understanding situations in which a pandemic,
like the current SARS-CoV-2, is affecting the global population
or a significant part of it. Like in most of complex systems we can
model this situation from a multi-scale perspective. Let us simplify this setting and consider for instance the three scales illustrated in Fig. \ref{Scales}

\begin{figure}[H]
	\begin{centering}
		\includegraphics[width=0.90\textwidth]{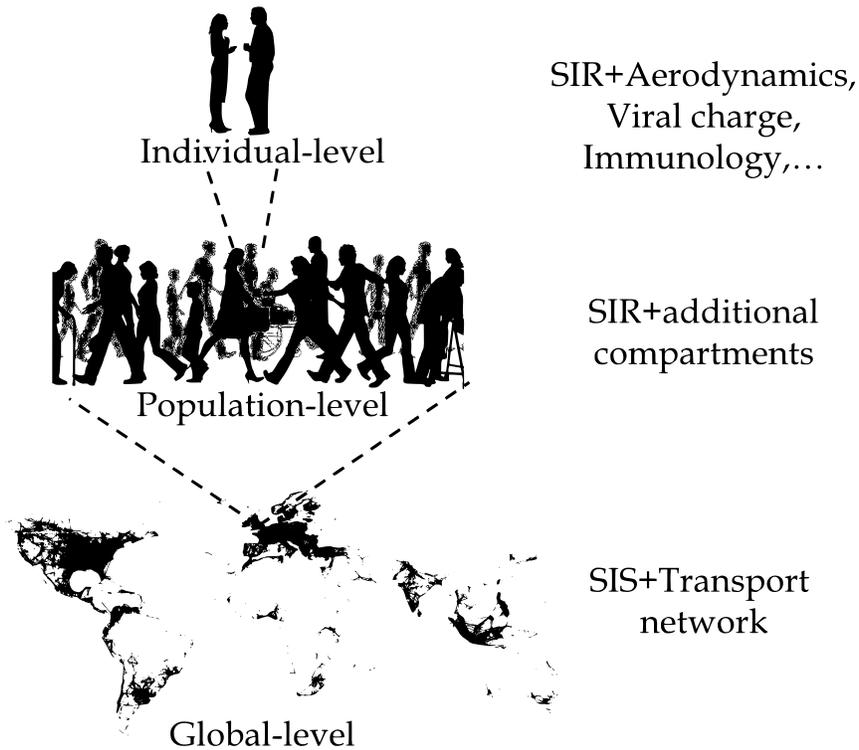}
	\par\end{centering}
	\caption{Schematic illustration of three different scales at which the propagation of a disease can be modeled in a complex system.}
	\label{Scales}
\end{figure}

At the smallest of the three we have the physical contact between two individuals. At the intermediate one we have the potential contacts between a significant part of the population. At the largest scale we have the interconnection between regions of the world, e.g., airports, cities, countries, etc.

To model the transmission of a virus at the smallest of the three scales considered we need not only a simple epidemiological model, which considers whether one individual is infected and the other is susceptible or have being recovered from the disease. We also need to consider particle aerodynamics, the viral charge of the infected individual, the immunological situation of the susceptible one, etc. In the case of transmission of diseases like flu, SARS, or SARS-CoV-2, between individuals it is well-known that the most appropriate model is the SIR one. Pairwise approximations for SIR-type network epidemics are analysed, for instance, in Keeling and Eames \cite{networks_epidemics}, in House and Keeling \cite{HouseKeeling2011} and in R\"{o}st et al. \cite{Vizi2018}. An extension of the edge-based compartmental model to SIR epidemics with general but independent distributions for time to transmission and duration of the infectious period is proposed in Sherborne et al. \cite{Miller2017}.

In modeling the intermediate situation, in case of viral diseases,
it is customary to use modified SIR models, where the population is
split into several compartments which include the classes of susceptible,
infected and recovered. The reader is referred to the review \cite{modeling_4}
and references therein for details in the case of SARS-CoV-2.

Now, in the case of the largest scale, we have some particularities which
need to be considered. In this largest scale scenario the system is represented by a network
in which the nodes are the countries, cities or airports. Let us consider
the case of airports. Then, an airport is susceptible to the disease
if none of the passengers in that airport at a given time is infected.
This airport can become infected due to the fact that infected passenger(s)
come from other nearest neighbor airports (see Fig. \ref{SI_transform}(a)).
It is clear that an airport cannot be considered ``recovered'' in
the sense of creating immunity, at least in the absence of quarantines
in this airport (which are not considered here). Therefore, an infected
airport can become susceptible again if the infected passenger(s)
that were located in that airport move away from it (see Fig. \ref{SI_transform}(b)).

\begin{figure}[H]
	\begin{centering}
		\subfloat[]{\begin{centering}
				\includegraphics[width=0.5\textwidth]{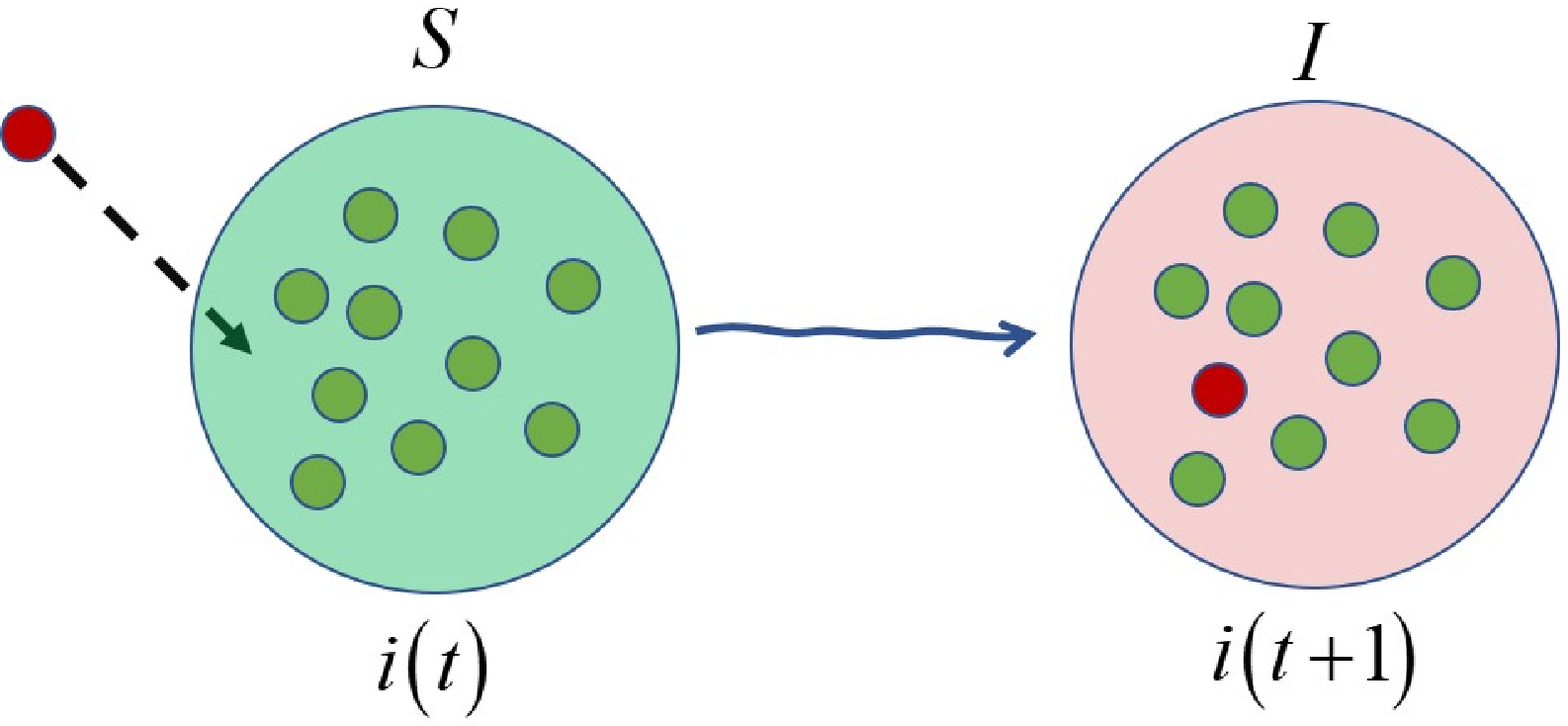}
				\par\end{centering}
		}
		\par\end{centering}
	\begin{centering}
		\subfloat[]{\begin{centering}
				\includegraphics[width=0.5\textwidth]{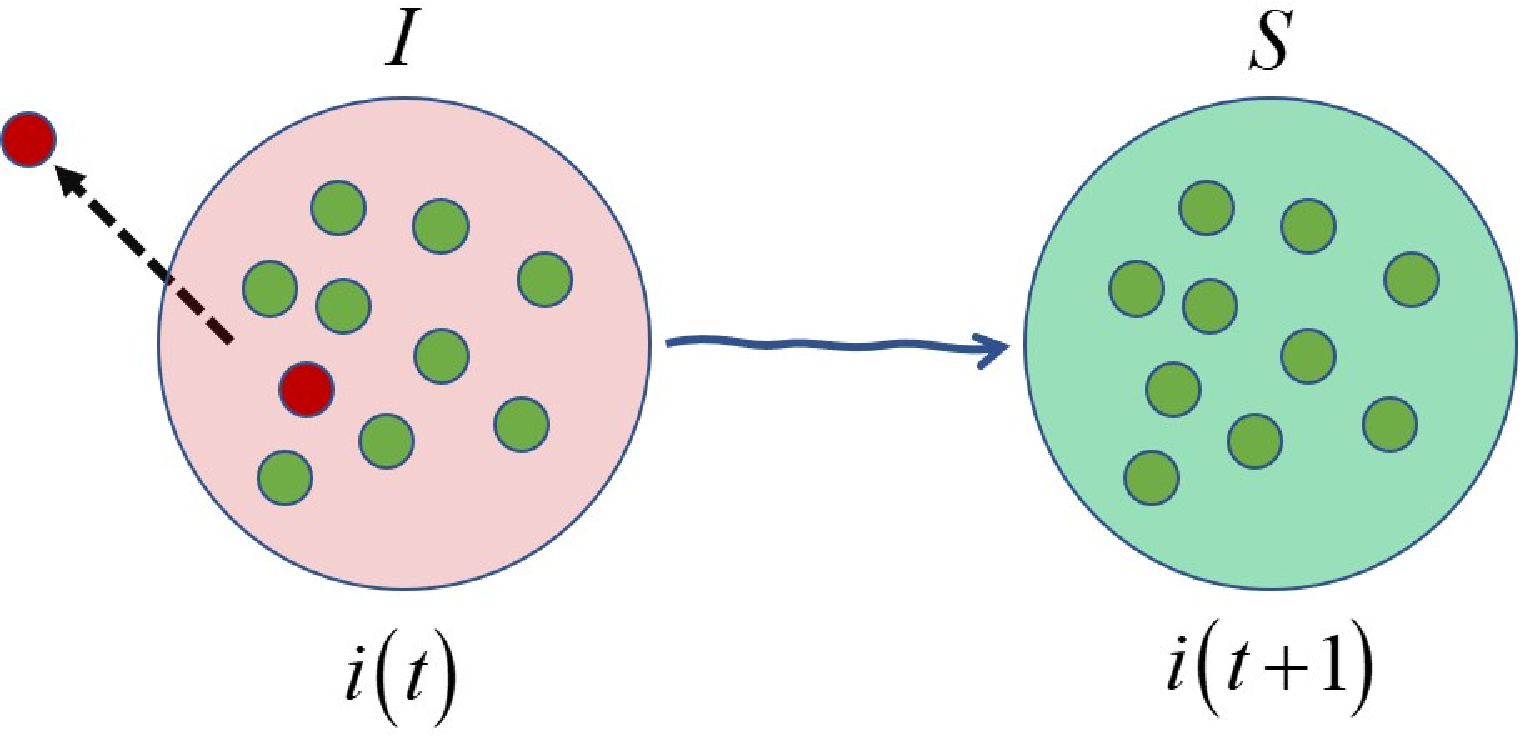}
				\par\end{centering}
		}
		\par\end{centering}
	\caption{Illustration of the $S\rightarrow I$ (a) and $I\rightarrow S$ (b)
		transformations of an airport labeled as $i$ in the airport network.}
	\label{SI_transform}
\end{figure}

This SIS strategy has been applied to similar situations for instance by Omi\'{c} and Van Mieghem \cite{VanMieghem2010} who considered a city as a node which
can have infection introduced over the air transportation network
from other cities. In this scenario they considered the SIS model
as the adequate modeling tool. In another work Matamalas et al. \cite{Matamalas2018}
used again SIS for the worldwide air transportation network, with
the goal of identifying the most important connections between airports
for the spreading of epidemics and evaluate the epidemic incidence
after its deactivation. Sanders et al. \cite{SIS_airports_1} analyzed the
spread of an infection disease through 23 subpopulations via (documented)
air traffic data, and considering that the country is internationally
quarantined. Similarly, Qu and Wang \cite{SIS_airports_5}, Scaman et al.
\cite{Scaman2014}, Onoue et al. \cite{Onoue2020}, and Ruan et al., \cite{SIS_airports_2}, Meloni et al. \cite{Meloni2009} and Ye et al. \cite{Ye2020}, among others, used SIS for modeling diseases propagating through airports in a network system.

\section{Preliminaries}

We consider here (weighted) graphs $\Gamma=(V,E,W,\varphi)$ where
$V$ is the set of vertices (nodes), $E$ is the set of edges and
$W$ is a set of weights $w_{ij}\in W$, such that $w_{ij}\in\mathbb{R}^{+}$
\cite{Estrada2012book,Newman2010}. The weights are assigned to the
edges by the surjective mapping $\varphi:E\rightarrow W$. In the
case of unweighted graphs we consider that $w_{ij}=1$ for $\left(i,j\right)\in E$.
We always consider $\#V=n$ and $\#E=m.$ All graphs here are undirected,
therefore their adjacency matrices $A$ are symmetric. Let $k_{i}$
denotes the (weighted) degree of the node $i$. Then, ${\bf k}=[k_{1},\dots,k_{n}]^{T}$
is the degree vector. The eigenvalues and eigenvectors of the adjacency
matrix are designated, respectively, by $\lambda_{i},\ i=1,\dots,n$,
$\lambda_{1}>\lambda_{2}\geq\dots\geq\lambda_{n}$ and $\psi_{i},\ i=1,\dots,n$.
To complete the notation we use ${\bf u}=[1,\dots,1]^{T}$ as the
all ones vector, $U={\bf u}{\bf u}^{T}$, ${\bf 0}=[0,\dots,0]^{T}$
and $I$ the identity matrix

In the implementation of compartmental epidemiological models we consider
$x_{i}(t)$ to be the probability that node $i$ is infected at time
$t$ \cite{Bullo}. For the whole network, we define the vector ${\bf x}(t)=[x_{1}(t),\dots,x_{n}(t)]^{T}$.
We designate by $p$ the initial probability of being infected, by
$q=1-p$ the initial probability of being healthy; $\beta$ is the
infection rate \textit{per link,} $\gamma$ the recovering rate, $\beta_{e}=\beta/\gamma$
the effective infectivity rate and $\tau$ the epidemic threshold, that is the critical effective  rate above which the disease infects a non-zero fraction of the whole population.

We consider here a SIS epidemiological model on the graph $\varGamma$ \cite{kiss2017mathematics,Bullo}.
In this case an infected node can infect any of its nearest susceptible
neighbors which then become infected with infection rate $\beta>0$.
The infected node can recover with recovery rate $\gamma>0$
and become susceptible again (see Fig. \ref{SIS_model}).

\begin{figure}[H]
	\begin{centering}
		\includegraphics[width=0.45\textwidth]{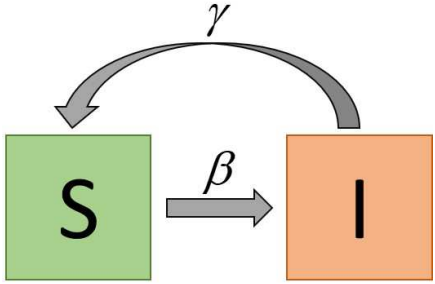}
		\par\end{centering}
	\caption{Diagram illustrating the flux of susceptible nodes (S) that
		become infected/infectious (I) with rate $\beta$ and which can cure
		without immunity and become susceptible again with rate $\gamma$.}
	\label{SIS_model}
\end{figure}

The evolution of the probability of getting infected for a node $i$
is described by \cite{Bullo}:
\begin{equation}
	\dfrac{dx_{i}\left(t\right)}{dt}=\beta\left(1-x_{i}\left(t\right)\right)\sum_{j\in\mathcal{N}_{i}}A_{ij}x_{j}\left(t\right)-\gamma x_{i}\left(t\right),\ t\geq {t_0},\label{eq:SI_original-1}
\end{equation}
where $A_{ij}$ are the entries of the adjacency matrix for the pair
of nodes $i$ and $j$, and $\mathcal{N}_{i}$ is the set of nearest neighbors
of $i$. In matrix-vector form, it becomes:

\begin{equation}
	\dot{\mathbf{x}}\left(t\right)=\dfrac{d\mathbf{x}\left(t\right)}{dt}=\beta\left[I_{N}-\textnormal{diag}\left(\mathbf{x}\left(t\right)\right)\right]A\mathbf{x}\left(t\right)-\gamma\mathbf{x}\left(t\right),
\end{equation}
with initial condition $\mathbf{x}\left(0\right)=\mathbf{x}_{0}=p{\bf u}.$
The following two results can be found in \cite{Bullo} and characterize
the behavior of network SIS below and over the epidemic threshold,
which we adapt here to the case of undirected networks only:
\begin{theorem}
	If $\beta\lambda_{1}/\gamma<1$ we have the following results:
	
	(i) if ${\bf x}_{0}\in[0,1]^{n}$ then ${\bf x}(t)\in[0,1]^{n}$ for
	all $t>0$;
	
	(ii) there exists a unique equilibrium point ${\bf x}^{\star}={\bf 0}$
	and it is exponentially stable;
	
	(iii) the linearization of the model around the point ${\bf 0}$ is
	given by
	
	\begin{equation}
		\dot{\mathbf{x}}\left(t\right)=\left(\beta A-\gamma I\right)\mathbf{x}\left(t\right).
	\end{equation}
\end{theorem}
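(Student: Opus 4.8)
The plan is to prove the three assertions in the order (iii), (i), (ii), since the linearization of (iii) supplies the comparison system used in (ii) and the invariance of (i) is what makes that comparison global. Claim (iii) is a routine Jacobian computation. Writing the right-hand side componentwise as $F_i(\mathbf{x}) = \beta(1-x_i)\sum_j A_{ij}x_j - \gamma x_i$ and applying the product rule gives $\partial F_i/\partial x_k = -\beta\delta_{ik}\sum_j A_{ij}x_j + \beta(1-x_i)A_{ik} - \gamma\delta_{ik}$; evaluating at $\mathbf{x}=\mathbf{0}$ annihilates the first term and sets $1-x_i=1$, leaving the Jacobian $\beta A - \gamma I$, which is exactly the claimed linearization. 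For claim (i), I would check the Nagumo subtangentiality condition on the boundary of the closed hypercube $[0,1]^n$: on any face with $x_i=0$ one has $F_i = \beta\sum_j A_{ij}x_j \geq 0$ because all $A_{ij}\geq 0$ and $x_j\geq 0$, so the flow cannot leave through $x_i<0$, while on any face with $x_i=1$ one has $F_i = -\gamma < 0$, so the flow cannot leave through $x_i>1$. Since the field points inward (or is tangent) on every face, $[0,1]^n$ is forward invariant.

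The heart of the argument is claim (ii). The origin is trivially an equilibrium, and the key structural observation is the decomposition $F(\mathbf{x}) = (\beta A - \gamma I)\mathbf{x} - \beta\,\text{diag}(\mathbf{x})A\mathbf{x}$, in which the subtracted vector has $i$-th entry $\beta x_i\sum_j A_{ij}x_j \geq 0$ for every $\mathbf{x}\in[0,1]^n$. Hence $\dot{\mathbf{x}} \leq (\beta A - \gamma I)\mathbf{x}$ componentwise. Because $\beta A - \gamma I$ has nonnegative off-diagonal entries $\beta A_{ij}$ it is a Metzler (quasimonotone) matrix, so the comparison principle for cooperative systems applies: letting $\mathbf{y}(t)=e^{(\beta A-\gamma I)t}\mathbf{x}_0$ solve the linearized system with the same initial datum, I obtain $\mathbf{0}\leq\mathbf{x}(t)\leq\mathbf{y}(t)$ for all $t\geq 0$, the lower bound coming from (i). Since $A$ is symmetric, $\beta A - \gamma I$ has real spectrum with top eigenvalue $\beta\lambda_1-\gamma$, and the hypothesis $\beta\lambda_1/\gamma<1$ forces $\beta\lambda_1-\gamma<0$; thus $\|\mathbf{y}(t)\| \leq e^{(\beta\lambda_1-\gamma)t}\|\mathbf{x}_0\| \to 0$ exponentially, and squeezing gives exponential convergence of $\mathbf{x}(t)$ to $\mathbf{0}$ from every initial state in $[0,1]^n$. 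Uniqueness of the equilibrium then follows immediately: any other equilibrium $\mathbf{x}^\star\in[0,1]^n$ would generate a constant trajectory that is nevertheless forced to converge to $\mathbf{0}$, so $\mathbf{x}^\star=\mathbf{0}$.

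The main obstacle I anticipate is not a single computation but the rigorous passage from \emph{local} to \emph{global} exponential stability in (ii). The linearization in (iii) by itself yields only a neighborhood statement; upgrading the componentwise differential inequality to the pointwise bound $\mathbf{x}(t)\leq\mathbf{y}(t)$ requires the comparison theorem for cooperative systems, whose hypothesis is precisely the Metzler structure of $\beta A - \gamma I$. The crux is therefore twofold: confirming that this quasimonotone structure holds, and checking that the invariant hypercube keeps the nonlinear correction term $\beta\,\text{diag}(\mathbf{x})A\mathbf{x}$ of the right sign so that the inequality points the correct way. Once these are in place, the spectral bound on the symmetric matrix $\beta A - \gamma I$ and the squeeze complete the proof, and everything else is bookkeeping.
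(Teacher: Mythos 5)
Your proof is correct, but there is nothing in the paper to compare it against: the paper states this theorem without proof, importing it from the cited reference (Bullo's lectures on network systems), where it is established by essentially the argument you give. Your Jacobian computation in (iii) is right; your Nagumo check in (i) is right, since $F_i=\beta\sum_j A_{ij}x_j\ge 0$ on any face with $x_i=0$ and $F_i=-\gamma<0$ on any face with $x_i=1$; and your treatment of (ii) --- the decomposition $F(\mathbf{x})=(\beta A-\gamma I)\mathbf{x}-\beta\,\mathrm{diag}(\mathbf{x})A\mathbf{x}$ with a nonnegative correction term on the invariant cube, then the Metzler comparison and the spectral bound $\beta\lambda_1-\gamma<0$ --- is the standard route. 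If you want the comparison step self-contained rather than quoting the cooperative-systems theorem as a black box, write $\dot{\mathbf{x}}(t)=(\beta A-\gamma I)\mathbf{x}(t)+\mathbf{g}(t)$ with $\mathbf{g}(t)\preceq\mathbf{0}$ and apply variation of constants,
\begin{equation*}
\mathbf{x}(t)=e^{(\beta A-\gamma I)t}\mathbf{x}_0+\int_0^t e^{(\beta A-\gamma I)(t-s)}\mathbf{g}(s)\,ds,
\end{equation*}
using that the exponential of a Metzler matrix is entrywise nonnegative, so the integral term is $\preceq\mathbf{0}$; this yields $\mathbf{x}(t)\preceq e^{(\beta A-\gamma I)t}\mathbf{x}_0$ directly. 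One small point of bookkeeping: the uniqueness claim in (ii) should be read as uniqueness among equilibria in $[0,1]^n$, which is exactly the set to which your invariance and squeeze arguments apply, so your closing argument is adequate as stated.
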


\begin{theorem}
	If $\beta\lambda_{1}/\gamma>1$ we have the following results:
	
	(i) if ${\bf x}_{0}\in[0,1]^{n}$ then ${\bf x}(t)\in[0,1]^{n}$ for
	all $t>0$ and if ${\bf x}_{0}>{\bf 0}$ then ${\bf x}(t)>{\bf 0}$
	for all $t>0$;
	
	(ii) there exists an equilibrium point ${\bf x}^{\star}={\bf 0}$,
	the epidemic outbreak, exponentially unstable;
	
	(iii) there exists an equilibrium point ${\bf x}^{\star}\neq{\bf 0}$,
	the endemic state, exponentially stable such that:
	
	\begin{equation}
		{\bf x}^{\star}\rightarrow\left\{ \begin{array}{cc}
			a\left(\tfrac{\beta}{\gamma}\lambda_{1}-1\right)\psi_{1} & \quad \quad \textnormal{\ensuremath{\mathnormal{\textnormal{ if }}}}\beta\rightarrow\left(\tfrac{\gamma}{\lambda_{1}}\right)^{+}\\
			{\bf u}-\frac{\gamma}{\beta}\left(\textnormal{diag}\left({\bf k}\right)\right)^{-1} & \textnormal{\ensuremath{\mathnormal{\textnormal{ if }}}}\gamma\rightarrow0,
		\end{array}\right.
	\end{equation}
	
	where $a=\tfrac{\left\Vert \psi_{1}\right\Vert ^{2}}{\psi_{1}^{T}\textnormal{diag}\left(\psi_{1}\right)\psi_{1}}$.
\end{theorem}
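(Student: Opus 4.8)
The plan is to handle the four assertions in turn, working directly from the structural features of the vector field $F(\mathbf{x}) = \beta[I - \textnormal{diag}(\mathbf{x})]A\mathbf{x} - \gamma\mathbf{x}$ rather than from any explicit solution. For (i), I would establish positive invariance of $[0,1]^n$ by a tangency (Nagumo-type) argument on the boundary faces: on the face $x_i=0$ (other coordinates in $[0,1]$) the $i$-th component of the field is $\dot{x}_i = \beta\sum_j A_{ij}x_j \geq 0$ because $A_{ij}\geq 0$ and $x_j\geq 0$, so the flow cannot exit below; on the face $x_i=1$ the infection factor $(1-x_i)$ vanishes and $\dot{x}_i = -\gamma < 0$, so the flow cannot exit above. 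Strict positivity then follows from the differential inequality $\dot{x}_i \geq -\gamma x_i$, valid throughout $[0,1]^n$ since the infection term is nonnegative there; comparison gives $x_i(t)\geq x_i(0)e^{-\gamma t}$, which is positive whenever $x_i(0)>0$.

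For (ii), $\mathbf{x}^\star=\mathbf{0}$ is visibly an equilibrium, and because the quadratic term $\textnormal{diag}(\mathbf{x})A\mathbf{x}$ has vanishing derivative at the origin, the Jacobian there is exactly the linearization $\beta A - \gamma I$ recorded in the preceding theorem. Its eigenvalues are $\beta\lambda_i-\gamma$; the dominant one, $\beta\lambda_1-\gamma$, is strictly positive precisely under the hypothesis $\beta\lambda_1/\gamma>1$, so $\mathbf{0}$ is exponentially unstable.

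The substantive part is (iii). The key structural fact is that the system is cooperative on $[0,1]^n$: for $i\neq j$ one has $\partial F_i/\partial x_j = \beta(1-x_i)A_{ij}\geq 0$. This places the problem in the framework of monotone dynamical systems (the classical Lajmanovich--Yorke setting), from which I would deduce that above threshold there is a \emph{unique} equilibrium $\mathbf{x}^\star$ in the interior of $[0,1]^n$, exponentially stable and attracting every nonzero initial condition (for a connected network, and componentwise otherwise). Rigorously establishing existence, uniqueness and stability of this endemic state is the main obstacle; everything else is boundary bookkeeping or algebra. Once $\mathbf{x}^\star$ is in hand I would read off its coordinates from the fixed-point form of the equilibrium equation, $x_i^\star = \beta(A\mathbf{x}^\star)_i/[\gamma + \beta(A\mathbf{x}^\star)_i]$.

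The two limits then follow by asymptotic expansion of this relation. Near threshold I would set $\mathbf{x}^\star = c\,\psi_1 + O(c^2)$ along the Perron direction, substitute into $F(\mathbf{x}^\star)=0$, and project onto $\psi_1$: the $O(c)$ terms reproduce the threshold condition, while the $O(c^2)$ term fixes the amplitude, yielding $c = a(\tfrac{\beta}{\gamma}\lambda_1-1)$ with $a=\|\psi_1\|^2/(\psi_1^T\textnormal{diag}(\psi_1)\psi_1)$ to leading order---a transcritical bifurcation computation. In the opposite regime $\gamma\to 0$ the fixed-point relation forces $x_i^\star\to 1$, hence $\mathbf{x}^\star\to\mathbf{u}$ and $(A\mathbf{x}^\star)_i\to k_i$; expanding to first order in $\gamma$ gives $x_i^\star \approx 1 - \gamma/(\beta k_i)$, i.e. $\mathbf{x}^\star\approx\mathbf{u} - \tfrac{\gamma}{\beta}(\textnormal{diag}(\mathbf{k}))^{-1}\mathbf{u}$, matching the stated expression (read with the inverse degree matrix applied to $\mathbf{u}$).
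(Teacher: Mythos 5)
The paper gives no proof of this theorem at all: it is imported from the cited textbook source (Bullo's \emph{Lectures on Network Systems}), with the remark that both threshold theorems ``can be found in'' that reference, so there is no in-paper argument to compare against. Your outline---Nagumo-type invariance and the comparison bound $x_i(t)\geq x_i(0)e^{-\gamma t}$ for (i), the Jacobian $\beta A-\gamma I$ at the origin for (ii), and for (iii) the Lajmanovich--Yorke cooperative-systems framework together with the fixed-point relation $x_i^{\star}=\beta(A\mathbf{x}^{\star})_i/[\gamma+\beta(A\mathbf{x}^{\star})_i]$, the transcritical projection giving $c=a\left(\tfrac{\beta}{\gamma}\lambda_1-1\right)$, and the $\gamma\to0$ expansion (including the correct reading of the paper's typographically garbled limit as $\mathbf{u}-\tfrac{\gamma}{\beta}\left(\textnormal{diag}(\mathbf{k})\right)^{-1}\mathbf{u}$)---is precisely the standard argument underlying that citation and is sound, with the one invoked-rather-than-proven step (existence, uniqueness and exponential stability of the endemic equilibrium) being exactly the delegation to classical theory that the paper itself makes.
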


\subsection{Lee-Tenneti-Eun approximation of the SI model}

Using similar notations as before, the Susceptible-Infected (SI) model
is written as\cite{Bullo}:

\begin{equation}
	\dfrac{dx_{i}\left(t\right)}{dt}=\beta\left(1-x_{i}\left(t\right)\right)\sum_{j\in\mathcal{N}_{i}}A_{ij}x_{j}\left(t\right), t\geq {t_0},\label{eq:SI_original}
\end{equation}
which in matrix-vector form becomes:

\begin{equation}
	\dfrac{d{\bf x}\left(t\right)}{dt}=\beta\left[I_{N}-\textnormal{diag}\left({\bf x}\left(t\right)\right)\right]A{\bf x}\left(t\right),
\end{equation}
with initial condition ${\bf x}\left(0\right)={\bf x}_{0}$. It is well-known
that the linearization of the model around the point {\bf 0} is given by

\begin{equation}
	\dfrac{d{\bf x}\left(t\right)}{dt}=\beta A\,{\bf x}\left(t\right)\label{eq:model}
\end{equation}
which is exponentially unstable. Lee-Tenneti-Eun (LTE) \cite{Lee_et_al}
rewrote the SI equation as

\begin{equation}
	\dfrac{1}{1-x_{i}(t)}\dfrac{dx_{i}\left(t\right)}{dt}=\beta\sum_{j\in\mathcal{N}_{i}}A_{ij}\left(1-e^{-\left(-\log\left(1-x_{j}\left(t\right)\right)\right)}\right),
\end{equation}
which is equivalent to

\begin{equation}
	\dfrac{dy_{i}\left(t\right)}{dt}=\beta\sum_{j\in\mathcal{N}_{i}}A_{ij}f\left(y_{j}\left(t\right)\right),
\end{equation}
where $y_{i}\left(t\right)\coloneqq g\left(x_{i}\left(t\right)\right)=-\log\left(1-x_{i}\left(t\right)\right)\in\left[0,\infty\right]$,
$f\left(y\right)\coloneqq1-e^{-y}=g^{-1}\left(y\right)$.

They then considered the following linearized version of the previous nonlinear equation

\begin{equation}
	\dfrac{d\hat{\mathbf{y}}\left(t\right)}{dt}=\beta A\textnormal{diag}\left(1-\mathbf{x}\left(t_{0}\right)\right)\hat{\mathbf{y}}\left(t\right)+\beta\mathbf{h}\left(\mathbf{x}\left(t_{0}\right)\right),
\end{equation}
where $\hat{{\bf x}}\left(t\right)=f\left(\hat{{\bf y}}\left(t\right)\right)$
in which $\hat{{\bf x}}\left(t\right)$ is the approximate solution to the
SI model, $\hat{\mathbf{y}}\left(t_{0}\right)=g\left(\mathbf{x}\left(t_{0}\right)\right)$
and $\mathbf{h}\left(\mathbf{x}\right)\coloneqq\mathbf{x}+\left(1-\mathbf{x}\right)\log\left(\mathbf{u}-\mathbf{x}\right).$

They then proved the following result. \cite{Lee_et_al}
\begin{theorem}
	For any $t\geq t_{0}$,
	
	\begin{equation}
		\mathbf{x}\left(t\right)\preceq\mathbf{\hat{x}}\left(t\right)=f\left(\hat{\mathbf{y}}\left(t\right)\right)\preceq\mathbf{\tilde{x}}\left(t\right),
	\end{equation}
where
	
	\begin{equation}
		\begin{split}\hat{\mathbf{y}}\left(t\right) & =e^{\beta\left(t-t_{0}\right)A\textnormal{diag}\left(1-\mathbf{x}\left(t_{0}\right)\right)}g\left(\mathbf{x}\left(t_{0}\right)\right)\\
			& +\sum_{k=0}^{\infty}\dfrac{\left(\beta\left(t-t_{0}\right)\right)^{k+1}}{\left(k+1\right)!}\left[A\textnormal{diag}\left(1-\mathbf{x}\left(t_{0}\right)\right)\right]^{k}A\mathbf{h}\left(\mathbf{x}\left(t_{0}\right)\right),
		\end{split}
	\end{equation}
is the solution of the approximate model of LTE and
	
	\begin{equation}
		\mathbf{\tilde{x}}\left(t\right)=e^{\beta\left(t-t_{0}\right)A}\mathbf{x}\left(t_{0}\right),
	\end{equation}
is the solution of the linearized model.
\end{theorem}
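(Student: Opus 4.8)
\noindent\emph{Sketch of the intended argument.} The plan is to prove both inequalities by passing to the logarithmic coordinates $y_i=g(x_i)=-\log(1-x_i)$ and invoking comparison principles for cooperative (quasimonotone) systems. In these coordinates the exact SI model reads $\dot{\mathbf{y}}=\beta A\,f(\mathbf{y})$, understood componentwise, which is cooperative because $A\geq 0$ entrywise and $f'(y)=e^{-y}>0$; the same holds for the LTE model. I would record at the outset the three elementary facts that drive everything: $f(y)=1-e^{-y}$ is \emph{concave} and increasing, $g(x)=-\log(1-x)\geq x$, and the right-hand sides are nonnegative, so every solution issuing from $\mathbf{x}(t_0)$ is componentwise nondecreasing and remains $\succeq\mathbf{x}(t_0)$.

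For the left inequality $\mathbf{x}(t)\preceq\hat{\mathbf{x}}(t)$ I would argue as follows. The LTE model replaces each $f(y_j)$ by its tangent line $\ell_j(y)=f(y_j(t_0))+f'(y_j(t_0))(y-y_j(t_0))=(1-x_j(t_0))\,y+h_j(\mathbf{x}(t_0))$ at the base point $y_j(t_0)$. Concavity gives $\ell_j(y)\geq f(y)$ for every $y$, so the vector field of the approximate system dominates that of the exact one pointwise. Since both systems are cooperative and share the initial datum $\mathbf{y}(t_0)=g(\mathbf{x}(t_0))$, the Kamke--M\"uller comparison theorem yields $\hat{\mathbf{y}}(t)\succeq\mathbf{y}(t)$ for all $t\geq t_0$, and applying the increasing map $f$ gives $\hat{\mathbf{x}}(t)=f(\hat{\mathbf{y}}(t))\succeq f(\mathbf{y}(t))=\mathbf{x}(t)$. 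This direction is routine.

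For the right inequality $\hat{\mathbf{x}}(t)\preceq\tilde{\mathbf{x}}(t)$ I would first reduce it, on the region where $\tilde{\mathbf{x}}(t)\prec\mathbf{u}$ (elsewhere it is automatic since $\hat{\mathbf{x}}\prec\mathbf{u}$), to the equivalent statement $\hat{\mathbf{y}}(t)\preceq\tilde{\mathbf{y}}(t)$, where $\tilde{\mathbf{y}}=g(\tilde{\mathbf{x}})$ is obtained by differentiating $g$ along $\dot{\tilde{\mathbf{x}}}=\beta A\tilde{\mathbf{x}}$ and solves $\dot{\tilde{y}}_i=\frac{\beta}{1-f(\tilde y_i)}\sum_{j\in\mathcal{N}_i}A_{ij}f(\tilde y_j)$. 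Both $y$-systems are again cooperative and start from the same datum, so the natural strategy is to bracket them by a vector-field comparison.

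The hard part will be exactly this last comparison, and it is where the argument is genuinely delicate. In the first inequality concavity \emph{helped} (the tangent $\ell_j$ sits above $f$); here the same fact \emph{works against us}, since the approximate field contains $\ell_j\geq f$, while the linearized field carries the compensating amplification factor $1/(1-f(\tilde y_i))\geq 1$. A one-shot pointwise domination of the two vector fields \emph{fails} in general --- it breaks, for instance, where one coordinate is near its initial value while a neighbouring coordinate has already grown --- so the crude bound $\hat{\mathbf{y}}\preceq e^{\beta(t-t_0)A}g(\mathbf{x}(t_0))$ is too lossy to recover $\tilde{\mathbf{x}}$ after applying $f$ (it reduces to a weighted Bernoulli-type inequality that is false in general). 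The resolution I would pursue is to restrict the comparison to the forward-invariant region $\{\mathbf{y}\succeq g(\mathbf{x}(t_0))\}$ actually visited by the trajectories, where $1-f(\tilde y_i)=e^{-\tilde y_i}\leq 1-x_i(t_0)$, and to show that there the amplification outweighs the concavity defect; in the single-node case this collapses to a transcendental inequality that agrees in value and in first derivative at $t_0$ and can then be propagated. Carrying this balancing argument through the coupled network terms is the crux, and I would lean on the structure of LTE's original estimate to close it.
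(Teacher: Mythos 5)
Your argument for the first inequality is correct and is essentially the standard one: the LTE field replaces each $f(y_j)$ by its tangent line $\ell_j$ at $y_j(t_0)$, concavity of $f$ puts that tangent above $f$ everywhere, both fields are cooperative since $A\succeq 0$, and the Kamke--M\"{u}ller comparison from the common initial datum gives $\mathbf{y}(t)\preceq\hat{\mathbf{y}}(t)$, hence $\mathbf{x}(t)\preceq\hat{\mathbf{x}}(t)$ after applying the increasing map $f$. (For context: the paper does not prove this theorem at all --- it quotes it from Lee--Tenneti--Eun --- but it does prove its SIS analogue, Lemma 2, and that proof contains exactly the mechanism you are missing below.)

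The genuine gap is the inequality $\hat{\mathbf{x}}(t)\preceq\tilde{\mathbf{x}}(t)$, and the repair you sketch cannot close it. You propose comparing $\hat{\mathbf{y}}$ with $\tilde{\mathbf{y}}=g(\tilde{\mathbf{x}})$, whose field is $L_i(\mathbf{y})=\frac{\beta}{1-f(y_i)}\sum_j A_{ij}f(y_j)$, by dominating vector fields on the forward-invariant set $\{\mathbf{y}\succeq g(\mathbf{x}(t_0))\}$. But on that very set take $y_i=g(x_i(t_0))$, so the amplification $1/(1-f(y_i))=1/(1-x_i(t_0))$ is at its minimum there, and let a neighbouring coordinate $y_j$ be large: the LTE field at node $i$ contains $\beta A_{ij}\ell_j(y_j)$, which is linear and unbounded in $y_j$, while $L_i(\mathbf{y})\leq\frac{\beta}{1-x_i(t_0)}\sum_j A_{ij}$ stays bounded. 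This configuration is actually visited by both trajectories when growth is inhomogeneous (hub versus lagging leaf in a large star), so no Kamke-type field comparison in $y$-coordinates, restricted or not, can succeed; the ordering is true even though the fields are not ordered. What does work --- and is how the paper proves Lemma 2, invoking ``Lemma A.1 of LTE'' --- is a comparison of \emph{derivatives along the solutions}, using the closed form. Writing $D=\mathrm{diag}(\mathbf{u}-\mathbf{x}(t_0))$ and differentiating the stated solution,
\[
\dfrac{d\hat{\mathbf{y}}}{dt}=\beta e^{\beta\left(t-t_0\right)AD}\left[AD\,g\left(\mathbf{x}(t_0)\right)+A\mathbf{h}\left(\mathbf{x}(t_0)\right)\right]=\beta Ae^{\beta\left(t-t_0\right)DA}\mathbf{x}(t_0),
\]
where the collapse uses the componentwise identity $D\,g(\mathbf{x}(t_0))+\mathbf{h}(\mathbf{x}(t_0))=\mathbf{x}(t_0)$ and the intertwining $e^{\beta(t-t_0)AD}A=Ae^{\beta(t-t_0)DA}$. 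This vector is entrywise nonnegative, so the chain rule gives $\frac{d\hat{\mathbf{x}}}{dt}=\mathrm{diag}(e^{-\hat{\mathbf{y}}})\frac{d\hat{\mathbf{y}}}{dt}\preceq\frac{d\hat{\mathbf{y}}}{dt}$; and since $\mathbf{0}\preceq DA\preceq A$ entrywise implies $e^{\beta(t-t_0)DA}\preceq e^{\beta(t-t_0)A}$ entrywise, we get $\frac{d\hat{\mathbf{y}}}{dt}\preceq\beta Ae^{\beta(t-t_0)A}\mathbf{x}(t_0)=\frac{d\tilde{\mathbf{x}}}{dt}$. Integrating from the common initial condition yields $\hat{\mathbf{x}}(t)\preceq\tilde{\mathbf{x}}(t)$. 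Without this (or an equivalent) mechanism, your treatment of the second inequality remains a statement of intent, not a proof.
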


When $t_{0}=0$ and $x_{i}\left(0\right)=p$, $\forall i=1,2,\ldots,N$ the previous equation is transformed to

\begin{equation}
	\hat{\mathbf{y}}\left(t\right)=\left(1/q-1\right)e^{q\beta tA}\mathbf{u}-\left(1/q-1+\log q\right)\mathbf{u}. \label{LTE solution}
\end{equation}

\section{Mathematical results}

\subsection{Tight upper bound for the SIS model}

We start here by rewriting the SIS model in Eq. (\ref{eq:SI_original-1}) with the use of the variables $y_{i}\left(t\right)$, such that $x_{i}\left(t\right)=1-e^{-y_{i}\left(t\right)}$
and $\dot{x}_{i}(t)=e^{-y_{i}\left(t\right)}\dot{y}_{i}\left(t\right)$.
Then, we have

\begin{equation}
	\dot{y}_{i}\left(t\right)=\left[\beta\sum_{j=1}^{n}A_{ij}\left(1-e^{-y_{j}\left(t\right)}\right)\right]-\gamma\left(e^{y_{i}}-1\right),
\end{equation}
which can be transformed to

\begin{equation}
	\dot{y}_{i}\left(t\right)=\sum_{j=1}^{n}\left[ \beta A_{ij}\left(1-e^{-y_{j}\left(t\right)}\right)-\gamma \delta_{ij} \left(e^{y_{j}}-1\right)\right],\label{eq:SIS_new}
\end{equation}
using Kronecker $\delta_{ij}$ function.

Let us remark that $f\left(y\right)=1-e^{-y}$ is an increasing concave
function. Then

\begin{equation}
	f\left(y\right)<f\left(y_{0}\right)+f'\left(y_{0}\right)\left(y-y_{0}\right)=e^{-y_{0}}y+1-e^{-y_{0}}\left(y_{0}+1\right).
\end{equation}

Also $g\left(y\right)=e^{y}-1$ is an increasing convex function,
such that

\begin{equation}
	g\left(y\right)>g\left(y_{0}\right)+g'\left(y_{0}\right)\left(y-y_{0}\right)=e^{y_{0}}y-1-e^{y_{0}}\left(y_{0}-1\right).
\end{equation}

We can now apply these conditions to Eq. (\ref{eq:SIS_new}) to obtain

\begin{equation}
	\begin{split}\dot{y}_{i}\left(t\right)< &\, \beta e^{-y_{0}}\sum_{j=1}^{n}A_{ij}y_{j}\left(t\right)-\gamma e^{y_{0}}\sum_{j=1}^{n}\delta_{ij}y_{j}\left(t\right)\\
	& +\beta\left[1-e^{-y_{0}}(y_{0}+1)\right]\sum_{j=1}^{n}A_{ij}
	+\gamma \left[1+e^{y_{0}}(y_{0}-1)\right] \sum_{j=1}^{n} \delta_{ij} \coloneqq\hat{y}_{i}\left(t\right),
	\end{split}
\end{equation}
where we have called the upper bound $\hat{y}_{i}$. Using the notation
settled for the initial conditions, this equation is written as

\begin{equation}
	\dot{\hat{y}}_{i}=\beta q\sum_{j=1}^{n}A_{ij}\hat{y}_{j}-\frac{\gamma}{q}\sum_{j=1}^{n}\delta_{ij}\hat{y}_{j}+\beta(p+q\log q)\sum_{j=1}^{n}A_{ij}-\gamma\,\frac{p+\log q}{q},
\end{equation}
or in matrix-vector form as

\begin{equation}
	\dot{\hat{\mathbf{y}}}\left(t\right)=\left(\beta qA-\dfrac{\gamma}{q}I\right)\hat{\mathbf{y}}\left(t\right)+\left[\beta\left(p+q\log q\right)A-\dfrac{\gamma}{q}\left(p+\log q\right)I\right]\mathbf{u}.\label{eq:SIS_new_model}
\end{equation}

It is straightforward to realize that (\ref{eq:SIS_new_model}) has
the form:

\begin{equation}
	\dot{\hat{\mathbf{y}}}\left(t\right)=B\hat{\mathbf{y}}\left(t\right)+\mathbf{b},
\end{equation}
which can then be solved using the method of variation of parameters.
Therefore we have our main result.
\begin{theorem}
	Let $\mathbf{x}\left(t\right),$ $\tilde{\mathbf{x}}\left(t\right)$
	and \textup{$\mathbf{\hat{x}}\left(t\right)$} be, respectively, the
	solution of the exact, linearized $\dot{\tilde{\mathbf{x}}}\left(t\right)=\left(\beta A-\gamma I\right)\tilde{\mathbf{x}}\left(t\right)$,
	and approximate (\ref{eq:SIS_new_model}) SIS model, with the same
	initial conditions: $\mathbf{x}\left(0\right)=\mathbf{\hat{x}}\left(0\right)=\tilde{\mathbf{x}}\left(0\right)=\mathbf{x}_{0}=p\mathbf{u}$.
	Then,
	
	\begin{equation}
		\mathbf{x}\left(t\right)\preceq\mathbf{\hat{x}}\left(t\right)\preceq\tilde{\mathbf{x}}\left(t\right).
	\end{equation}
\end{theorem}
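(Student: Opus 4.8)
The plan is to prove the two inequalities by two separate comparison arguments, both carried out in the logarithmic coordinates $y_i=-\log(1-x_i)$ in which the bounding system of Eq.~(\ref{eq:SIS_new_model}) was built, and then transferred back to the original variables through the increasing map $f(y)=1-e^{-y}$, under which $x_i=f(y_i)$ and $\hat{x}_i=f(\hat{y}_i)$. Both steps rest on the standard comparison principle for cooperative (quasi-monotone) systems: if $\dot{\mathbf{u}}\preceq G(\mathbf{u})$, $\dot{\mathbf{v}}=G(\mathbf{v})$, $\mathbf{u}(0)\preceq\mathbf{v}(0)$, and the off-diagonal partial derivatives of $G$ are nonnegative, then $\mathbf{u}(t)\preceq\mathbf{v}(t)$ for all $t\ge 0$.

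For the lower bound $\mathbf{x}\preceq\hat{\mathbf{x}}$, the tangent-line inequalities displayed above say precisely that the concave $f$ lies below, and the convex $g(y)=e^{y}-1$ lies above, their tangents at $y_{0}=-\log q$. Substituting these into Eq.~(\ref{eq:SIS_new}) gives the pointwise differential inequality $\dot{y}_{i}\le \hat{F}_{i}(\mathbf{y})$, where $\hat{F}$ is the affine field on the right-hand side of Eq.~(\ref{eq:SIS_new_model}) and $\hat{\mathbf{y}}$ solves $\dot{\hat{\mathbf{y}}}=\hat{F}(\hat{\mathbf{y}})$ with the same initial value $\hat{\mathbf{y}}(0)=y_{0}\mathbf{u}$. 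Since $\partial \hat{F}_{i}/\partial y_{j}=\beta q A_{ij}\ge 0$ for $i\neq j$, the field $\hat{F}$ is cooperative, so the comparison principle yields $\mathbf{y}(t)\preceq\hat{\mathbf{y}}(t)$; applying the increasing $f$ componentwise gives $\mathbf{x}(t)\preceq\hat{\mathbf{x}}(t)$. This half is clean.

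For the upper bound $\hat{\mathbf{x}}\preceq\tilde{\mathbf{x}}$ I would move back to the $x$-variables and show that $\hat{\mathbf{x}}$ is a subsolution of the linearized model, i.e.
\[ \dot{\hat{x}}_{i}\le \beta\sum_{j}A_{ij}\hat{x}_{j}-\gamma\hat{x}_{i}\qquad\text{for all } i. \]
Because $\tilde{\mathbf{x}}$ satisfies the corresponding equality $\dot{\tilde{\mathbf{x}}}=(\beta A-\gamma I)\tilde{\mathbf{x}}$ with the same initial data and $\beta A-\gamma I$ is a Metzler matrix, a second application of the comparison principle then delivers $\hat{\mathbf{x}}(t)\preceq\tilde{\mathbf{x}}(t)$. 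To get the subsolution inequality I would use $\hat{x}_{i}=1-e^{-\hat{y}_{i}}$, hence $\dot{\hat{x}}_{i}=e^{-\hat{y}_{i}}\dot{\hat{y}}_{i}$, insert the explicit tangent lines, and pass to the shifted variable $u_{i}=\hat{y}_{i}-y_{0}$; the recovery contribution then collapses to the elementary quantity $1-e^{-u_{i}}(1+u_{i})$, which is nonnegative for every real $u_{i}$ since $1+u\le e^{u}$, while the infection contribution is controlled using $f(y)\le y$ together with the convexity of the exponential.

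The hard part is precisely this subsolution inequality. For the lower bound the concavity estimate (the tangent of $f$ lies above $f$) worked in our favour, but for the upper bound it points the wrong way, so the inequality must be salvaged by the damping factor $e^{-\hat{y}_{i}}\le 1$. This compensation is \emph{not} a pointwise identity — it can fail at arbitrary configurations, and in fact the ordering $\hat{\mathbf{x}}\preceq\tilde{\mathbf{x}}$ breaks down below threshold, where $\tilde{\mathbf{x}}\to\mathbf{0}$ but $\hat{\mathbf{x}}$ tends to a strictly positive equilibrium — so it holds only along the trajectory emanating from the common value $y_{0}$, and the margin one needs is supplied by the supercritical condition $\beta\lambda_{1}>\gamma$. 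I would therefore make that regime hypothesis explicit and prove the subsolution bound by a first-crossing-time argument, at which the additional information $\hat{x}_{j}\le\tilde{x}_{j}$ becomes available and tightens the estimate; establishing the inequality in that form is the one genuinely delicate step, the remaining bookkeeping being routine.
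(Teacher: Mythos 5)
Your first inequality is sound and is essentially the paper's own argument made more explicit: the tangent-line bounds turn the transformed SIS system into the differential inequality $\dot{\mathbf{y}}\preceq B\mathbf{y}+\mathbf{b}$ with $B=\beta qA-\frac{\gamma}{q}I$ Metzler, and the comparison principle for cooperative systems gives $\mathbf{y}\preceq\hat{\mathbf{y}}$, hence $\mathbf{x}\preceq\hat{\mathbf{x}}$ through the increasing map $f$ (the paper leaves the quasi-monotonicity step implicit; you state it, which is an improvement). Your observation that the ordering $\hat{\mathbf{x}}\preceq\tilde{\mathbf{x}}$ must fail below threshold is also correct and consistent with the paper: its proof of this half opens by restricting to $\beta_{e}>\tau$, even though the theorem statement omits that hypothesis.

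The genuine gap is in the second inequality, and it is exactly the step you defer. Your plan hinges on the subsolution property $\dot{\hat{x}}_{i}\leq\beta\sum_{j}A_{ij}\hat{x}_{j}-\gamma\hat{x}_{i}$, but this pointwise inequality is not salvageable, even above threshold and along the trajectory. First, the sign in your sketch is backwards: with $u_{i}=\hat{y}_{i}-y_{0}$ one computes that the recovery part of $\dot{\hat{x}}_{i}$ equals $-\gamma\hat{x}_{i}+\gamma\left(1-e^{-u_{i}}(1+u_{i})\right)$, so the nonnegative quantity you exhibit is precisely the \emph{excess} of your approximate recovery term over $-\gamma\hat{x}_{i}$; it works against the subsolution inequality, and must be absorbed by a deficit in the infection term rather than helping. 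Second, the inequality actually fails for large times: for a low-degree node with $\beta k_{i}<\gamma$ (perfectly compatible with $\beta q^{2}\lambda_{1}>\gamma$), the right-hand side tends to $\beta k_{i}-\gamma<0$ as $\hat{\mathbf{x}}\to\mathbf{u}$, while $\dot{\hat{x}}_{i}=e^{-\hat{y}_{i}}\dot{\hat{y}}_{i}\to0$ because $e^{-\hat{y}_{i}}$ decays double-exponentially. So any proof must be confined to first-crossing configurations, where you give no argument — and that is the entire content of the claim. The paper takes a route that bypasses the subsolution formulation altogether, exploiting the fact that both bounding systems are explicitly solvable: from (\ref{eq:Solution}) it writes $\frac{d\hat{\mathbf{y}}}{dt}=e^{Bt}\left[\mathbf{b}-\log q\,B\mathbf{u}\right]=e^{Bt}\,p\left(\beta A-\frac{\gamma}{q}I\right)\mathbf{u}$, bounds $\frac{d\hat{\mathbf{x}}}{dt}=e^{-\hat{\mathbf{y}}}\frac{d\hat{\mathbf{y}}}{dt}\preceq\frac{d\hat{\mathbf{y}}}{dt}$, compares this entrywise with $\frac{d\tilde{\mathbf{x}}}{dt}=e^{(\beta A-\gamma I)t}p(\beta A-\gamma I)\mathbf{u}$ (both exponentials are functions of the same $A$, and $q<1$ orders the spectra and the constant vectors), and then invokes Lemma A.1 of LTE: equal initial conditions plus $\frac{d\hat{\mathbf{x}}}{dt}\preceq\frac{d\tilde{\mathbf{x}}}{dt}$ for all $t\geq0$ imply $\hat{\mathbf{x}}\preceq\tilde{\mathbf{x}}$. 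Replacing your crossing-time plan with this derivative comparison is what closes the argument.
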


We will prove this result by two parts using the following Lemmas.
We should remark that this result indicates that the solution of the
approximate (\ref{eq:SIS_new_model}) SIS model represents an upper
bound to the exact solution, which is always below the diverging solution
of the linearized SIS model. Let us now prove the first part of this
results using the following.

\begin{lemma}
	Let \textup{$\mathbf{y}\left(t\right)$ be the transformed solution
		of the SIS model. Let $\hat{\mathbf{y}}\left(t\right)$ be the solution
		of the (\ref{eq:SIS_new_model}) model, then}
	
	\begin{equation}
		\mathbf{y}\left(t\right)\preceq\hat{\mathbf{y}}\left(t\right)=e^{Bt}\left[B^{-1}\mathbf{b}-\log q \, \mathbf{u}\right]-B^{-1}\mathbf{b},\label{eq:Solution}
	\end{equation}
	where
	
	\begin{equation}
		B=\beta qA-\dfrac{\gamma}{q}I
		\label{B}
	\end{equation}
	and
	
	\begin{equation}
		\mathbf{b}=\left[\left(p+q\log q\right)\beta A-\left(p+\log q\right)\dfrac{\gamma}{q}I\right]\mathbf{u}.
		\label{b}
	\end{equation}
\end{lemma}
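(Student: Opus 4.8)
The plan is to split the argument into two essentially independent pieces: first solve the constant-coefficient inhomogeneous system (\ref{eq:SIS_new_model}) in closed form to obtain the stated expression for $\hat{\mathbf{y}}(t)$, and then establish the componentwise inequality $\mathbf{y}(t)\preceq\hat{\mathbf{y}}(t)$ by a differential-inequality (comparison) argument that exploits the sign structure of $B$.

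For the closed form, I would first observe that the common initial condition $\mathbf{x}(0)=p\mathbf{u}$ translates, under $y_i=-\log(1-x_i)$ with $q=1-p$, into $\hat{\mathbf{y}}(0)=-\log q\,\mathbf{u}$. Since (\ref{eq:SIS_new_model}) has the form $\dot{\hat{\mathbf{y}}}=B\hat{\mathbf{y}}+\mathbf{b}$ with constant $B$ and $\mathbf{b}$, variation of parameters gives $\hat{\mathbf{y}}(t)=e^{Bt}\hat{\mathbf{y}}(0)+\int_{0}^{t}e^{B(t-s)}\mathbf{b}\,ds$. Using that $B$ and $B^{-1}$ commute with $e^{Bs}$ and evaluating $\int_{0}^{t}e^{-Bs}\,ds=B^{-1}(I-e^{-Bt})$ (valid whenever $B$ is invertible, i.e. whenever $\gamma/(\beta q^{2})$ is not an eigenvalue of $A$), the particular-solution term collapses to $(e^{Bt}-I)B^{-1}\mathbf{b}$; substituting $\hat{\mathbf{y}}(0)=-\log q\,\mathbf{u}$ then reproduces exactly (\ref{eq:Solution}). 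This part is a routine computation once the initial condition is pinned down.

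The substantive part is the inequality. Here I would invoke the tangent bounds already derived before the statement: because $f(y)=1-e^{-y}$ is concave and $g(y)=e^{y}-1$ is convex, evaluating both at the common base point $y_{0}=-\log q$ gives, along the \emph{true} transformed trajectory, $\dot{y}_i(t)<(B\mathbf{y}(t)+\mathbf{b})_i$ for every $i$, while by construction $\dot{\hat{y}}_i=(B\hat{\mathbf{y}}+\mathbf{b})_i$. Setting $\mathbf{w}(t):=\hat{\mathbf{y}}(t)-\mathbf{y}(t)$ and subtracting yields $\dot{\mathbf{w}}\succeq B\mathbf{w}$ with $\mathbf{w}(0)=\mathbf{0}$. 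Writing $\mathbf{r}(t):=\dot{\mathbf{w}}(t)-B\mathbf{w}(t)\succeq\mathbf{0}$ and treating $\mathbf{r}$ as a nonnegative forcing term, variation of parameters again produces $\mathbf{w}(t)=\int_{0}^{t}e^{B(t-s)}\mathbf{r}(s)\,ds$.

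The crux of the lemma, and the step I expect to be the main obstacle, is the positivity conclusion. The key observation is that $B=\beta qA-(\gamma/q)I$ is a Metzler matrix: its off-diagonal entries are $\beta q A_{ij}\geq 0$ since $A\geq 0$ and $\beta,q>0$. Consequently $e^{B(t-s)}$ is entrywise nonnegative for all $t\geq s$, so the integrand above is $\succeq\mathbf{0}$ and hence $\mathbf{w}(t)\succeq\mathbf{0}$, i.e. $\mathbf{y}(t)\preceq\hat{\mathbf{y}}(t)$; equivalently, this is an instance of the Kamke--M\"{u}ller comparison theorem for quasimonotone (cooperative) systems. The delicacy is precisely this positivity: the naive attempt to multiply by $e^{-Bt}$ fails because $-B$ is \emph{not} Metzler, so the argument must be organized around the nonnegativity of $e^{Bt}$ rather than $e^{-Bt}$, which is exactly where the nonnegative sign pattern of $A$ enters. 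A secondary point needing care is the invertibility of $B$ used in the closed form; this holds generically, and the degenerate case follows by continuity in the parameters.
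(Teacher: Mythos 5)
Your proof is correct, and its skeleton --- the tangent-line bounds coming from the concavity of $f(y)=1-e^{-y}$ and the convexity of $g(y)=e^{y}-1$ at the base point $y_{0}=-\log q$, followed by variation of parameters for the constant-coefficient system --- is the same as the paper's. The genuine difference is in the ordering step $\mathbf{y}(t)\preceq\hat{\mathbf{y}}(t)$, which is the heart of the lemma. The paper offers no proof of it: in the derivation preceding the theorem it passes from the differential inequality $\dot{y}_{i}(t)<\left(B\mathbf{y}(t)+\mathbf{b}\right)_{i}$ directly to the linear ODE $\dot{\hat{\mathbf{y}}}=B\hat{\mathbf{y}}+\mathbf{b}$, implicitly assuming that the solution of the equality system dominates any solution of the inequality system; no comparison principle is stated or justified, and the lemma itself appears without a proof environment. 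You identify exactly this as the crux and close it: with $\mathbf{w}=\hat{\mathbf{y}}-\mathbf{y}$ and $\mathbf{r}=\dot{\mathbf{w}}-B\mathbf{w}\succeq\mathbf{0}$, variation of parameters gives $\mathbf{w}(t)=\int_{0}^{t}e^{B(t-s)}\mathbf{r}(s)\,ds$, and the conclusion $\mathbf{w}(t)\succeq\mathbf{0}$ follows from the entrywise nonnegativity of $e^{B(t-s)}$, which holds precisely because $B=\beta qA-(\gamma/q)I$ is Metzler (its off-diagonal entries $\beta qA_{ij}$ are nonnegative). This Kamke--M\"{u}ller-type argument is the one piece of mathematics the claimed inequality actually needs --- for a general matrix $B$ the domination could fail --- so your write-up is strictly more complete than the paper's and would be a worthwhile addition to it. Your secondary points are also sound: the closed form $e^{Bt}\left[-\log q\,\mathbf{u}\right]+\left(e^{Bt}-I\right)B^{-1}\mathbf{b}$ agrees with (\ref{eq:Solution}); the invertibility of $B$, which the paper uses silently, indeed requires that $\gamma/(\beta q^{2})$ not be an eigenvalue of $A$ and is correctly dispatched by a continuity argument in the degenerate case; and your remark that the positivity must be routed through $e^{Bt}$ rather than $e^{-Bt}$ correctly locates where the sign structure of $A$ enters.
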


\begin{remark}
	Let's make some remarks about solution (\ref{eq:Solution}):
	
	(i) if $t=0$: $\hat{{\bf y}}(0)=-\log q\,{\bf u}$ and $\hat{{\bf x}}(0)=p$; 
	
	(ii) if $\gamma=0$, we have $B=q\beta A$ and ${\bf b}=(p+q\log q)\beta A{\bf u}$
	so that $B^{-1}{\bf b}=\left(\frac{p}{q}+\log q\right){\bf u}$. Solution
	(\ref{eq:Solution}) reduces to
	\begin{equation}
		\hat{{\bf y}}(t)=\frac{p}{q}e^{q\beta At}{\bf u}-\left(\frac{p}{q}+\log q\right){\bf u}
	\end{equation}
	which is equal to the solution for the SI Model by LTE in Eq. (\ref{LTE solution}). Let us observe that for $t\to+\infty$, $\hat{{\bf y}}(t)\to+\infty$ and $\hat{{\bf x}}(t)\to1$.
	
	(iii) if $\beta=0$: $B=-\frac{\gamma}{q}I$ and ${\bf b}=-(p+\log q)\frac{\gamma}{q}{\bf u}$
	so that $B^{-1}{\bf b}=(p+\log q){\bf u}$. Solution (\ref{eq:Solution})
	becomes
	\begin{equation}
		\hat{{\bf y}}(t)=p\left(e^{-\frac{\gamma}{q}t}-1\right){\bf u}-\log q\,{\bf u}
	\end{equation}
	Let us observe that for $t\to+\infty$, $\hat{{\bf y}}(t)\to-p-\log q$
	and $\hat{{\bf x}}(t)\to1-qe^{p}$. This bound doesn't converge to
	${\bf 0}$ as $t\to+\infty$ but to ${\bf x}^{\star}=(1-qe^{p}){\bf u}$.
	Observe that $0<1-qe^{p}<p$, as expected, and that $1-qe^{p}\to0$
	as $p\to0$. For instance, $x^{\star}<0.1$ if $p<0.392$; $x^{\star}<0.01$ if $p<0.135$; $x^{\star}<0.001$ if $p<0.044$.
	
	(iv) if $\beta\neq0$ and $\gamma\neq0$, the exponential term in
	equation (\ref{eq:Solution}) can be written as
	\begin{equation}
		e^{Bt}=e^{(\beta qA-\frac{\gamma}{q}I)t}=e^{(\beta qM\Lambda M^{T}-\frac{\gamma}{q}I)t}=e^{M(\beta q\Lambda-\frac{\gamma}{q}I)M^{T}t}=Me^{(\beta q\Lambda-\frac{\gamma}{q}I)t}M^{T}
	\end{equation}
	where $\Lambda$ is the diagonal matrix of the eigenvalues of $A$
	and $M$ is the orthogonal matrix whose columns are the eigenvectors
	of $A$. As $t$ grows to $+\infty$, the diagonal exponential terms
	$e^{(\beta q\lambda_{i}-\frac{\gamma}{q})t}$ grows to $+\infty$
	if $(\beta q\lambda_{i}-\frac{\gamma}{q})>0$. In particular, if $(\beta q\lambda_{1}-\frac{\gamma}{q})<0$,
	no one of these terms grows to $+\infty$ and the epidemic decays.
	Thus, we can identify a threshold given by the following condition:
	
	\begin{equation}
		\beta_{e}=\dfrac{\beta}{\gamma}<\dfrac{1}{q^{2}\lambda_{1}}=\tau
	\end{equation}
such that $\tau=\frac{1}{q^{2}\lambda_{1}}$ is the threshold of this bound solution. Then, if $\beta_{e}<\tau$ the epidemic decays; if $\beta_{e}>\tau$ the epidemic grows. We should remark that as $\lambda_{1}$ increases (and so does the average degree in the network), condition above become stricter and the spread of epidemics is facilitated. Moreover, in general, $\tau$ is bigger than $1/\lambda_{1}$, which is the threshold in the exact solution of SIS Networked Model, and it approaches such a threshold as $p$ decreases.
	
\end{remark}

\begin{lemma}
	Let $\tilde{\mathbf{x}}\left(t\right)$ be the solution of the linearized
	SIS problem $\dot{\tilde{\mathbf{x}}}\left(t\right)=\left(\beta A-\gamma I\right)\tilde{\mathbf{x}}\left(t\right)$.
	Then
	
	\begin{equation}
		\mathbf{\hat{x}}\left(t\right)\preceq\tilde{\mathbf{x}}\left(t\right).
	\end{equation}
\end{lemma}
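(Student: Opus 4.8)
The plan is to transfer the inequality to the $y$-variable, where both sides become flows of autonomous systems sharing the same initial datum, and then run a comparison (monotonicity) argument for cooperative systems.

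First I would reduce the statement. Since $s\mapsto 1-e^{-s}$ is strictly increasing and $\hat x_i=1-e^{-\hat y_i}<1$ for every $i$, the bound $\hat x_i(t)\le\tilde x_i(t)$ is automatic at any component where $\tilde x_i(t)\ge 1$, while where $\tilde x_i(t)<1$ it is equivalent to $\hat y_i(t)\le\tilde y_i(t)$ with $\tilde y_i\coloneqq-\log(1-\tilde x_i)$. A direct computation from $\dot{\tilde{\mathbf x}}=(\beta A-\gamma I)\tilde{\mathbf x}$ and $\tilde x_i=f(\tilde y_i)$ gives the transformed equation
\begin{equation}
	\dot{\tilde y}_i=\beta e^{\tilde y_i}\sum_{j}A_{ij}f(\tilde y_j)-\gamma g(\tilde y_i)\eqqcolon G_i(\tilde{\mathbf y}),
\end{equation}
with $f(y)=1-e^{-y}$, $g(y)=e^{y}-1$. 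Recalling that $\hat{\mathbf y}(0)=\tilde{\mathbf y}(0)=-\log q\,\mathbf u$, the two trajectories start together; note that $G$ differs from the exact $y$-equation only by the factor $e^{\tilde y_i}\ge 1$, reflecting that the linearized flow overestimates infection.

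Next I would set up the comparison. Writing $\ell_f(y)=qy+p+q\log q$ and $\ell_g(y)=y/q+(p+\log q)/q$ for the tangent lines of $f,g$ at $y_0=-\log q$ (these are exactly the linearizations used to build (\ref{eq:SIS_new_model})), the bound solves $\dot{\hat y}_i=\beta\sum_j A_{ij}\ell_f(\hat y_j)-\gamma\ell_g(\hat y_i)\eqqcolon F_i(\hat{\mathbf y})$. Both fields are cooperative, since $\partial F_i/\partial y_j=\beta qA_{ij}\ge 0$ and $\partial G_i/\partial y_j=\beta e^{y_i}A_{ij}e^{-y_j}\ge 0$ for $j\ne i$. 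By the comparison principle for cooperative systems — whose proof examines the first time $t_1$ at which the order could break, where $\hat y_i(t_1)=\tilde y_i(t_1)$ and $\hat y_k(t_1)\le\tilde y_k(t_1)$ for all $k$ — monotonicity of $G_i$ in its off-diagonal arguments yields $G_i(\tilde{\mathbf y}(t_1))\ge G_i(\hat{\mathbf y}(t_1))$, so that
\begin{equation}
	\tfrac{d}{dt}\big(\tilde y_i-\hat y_i\big)\big|_{t_1}\ge G_i(\hat{\mathbf y}(t_1))-F_i(\hat{\mathbf y}(t_1)).
\end{equation}
Hence it suffices to verify, along $\mathbf y=\hat{\mathbf y}(t_1)$, the pointwise inequality $G_i(\mathbf y)\ge F_i(\mathbf y)$, which reorders to
\begin{equation}
	\beta\sum_{j}A_{ij}\big[e^{y_i}f(y_j)-\ell_f(y_j)\big]\ \ge\ \gamma\big[g(y_i)-\ell_g(y_i)\big].
\end{equation}

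The hard part will be exactly this inequality, and it is where the two approximations compete. Convexity of $g$ gives $g-\ell_g\ge 0$, so the recovery side is nonnegative; concavity of $f$ gives $\ell_f\ge f$, so were it not for the factor $e^{y_i}$ the infection side would be $\le 0$, and the whole estimate must therefore be carried by $e^{y_i}\ge 1$. A pointwise check is clean at the homogeneous state $\mathbf y=-\log q\,\mathbf u$, where the margin equals $\beta k_i p^{2}/q>0$ (with $k_i$ the weighted degree), but for heterogeneous configurations — $y_i$ small while a neighbour $y_j$ is large — the linear growth of $\ell_f$ can momentarily dominate, so no naive pointwise domination of the vector fields is available. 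To close the gap I would exploit the explicit solution (\ref{eq:Solution}): it keeps $\hat{\mathbf y}(t)\succeq\mathbf 0$ and, being the flow of the cooperative linear system with matrix $B=\beta qA-\tfrac{\gamma}{q}I$, couples each $e^{y_i}$ to the neighbour sum $\sum_j A_{ij}f(y_j)$, reducing the estimate to a scalar convexity statement of the type $ps\ge\log q+\tfrac{p}{q}(1-e^{qs})$ which settles the Perron-aligned (e.g.\ regular-graph) case. Controlling this competition uniformly over all graphs, rather than only along the leading eigenvector, is the principal obstacle of the proof.
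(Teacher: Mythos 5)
Your reduction to the $y$-variables is set up correctly (the transformed equation for $\tilde{\mathbf y}$, the tangent-line identifications $\ell_f,\ell_g$, the cooperativity checks, and the margin $\beta k_i p^2/q$ at the homogeneous state are all right), but the proof does not close, and you say so yourself. Everything hinges on the subsolution property $F_i(\hat{\mathbf y}(t))\le G_i(\hat{\mathbf y}(t))$, i.e.
\begin{equation}
\beta\sum_{j}A_{ij}\bigl[e^{\hat y_i}f(\hat y_j)-\ell_f(\hat y_j)\bigr]\ \ge\ \gamma\bigl[g(\hat y_i)-\ell_g(\hat y_i)\bigr],
\end{equation}
and this is genuinely false along the actual trajectory, not merely for contrived states. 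Take a star graph above threshold: the Perron vector concentrates on the hub, so $\hat y_{\mathrm{hub}}$ runs ahead of the leaf values by a factor of order $\sqrt{n-1}$; for a leaf $i$ whose only neighbour is the hub $j$, the left side is $\beta\bigl[e^{\hat y_i}f(\hat y_j)-\ell_f(\hat y_j)\bigr]\approx\beta\bigl[e^{\hat y_i}-q\hat y_j\bigr]$, negative as soon as $\hat y_j$ is moderately large while $\hat y_i$ is still of order one, whereas the right side is nonnegative by convexity of $g$. So the Kamke--M\"uller comparison cannot be invoked, and your proposed patch (reducing to a scalar convexity inequality along the Perron direction) covers only regular or near-regular graphs, as you concede. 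There is also a secondary issue you gloss over: components of $\tilde{\mathbf x}$ reach $1$ in finite time, after which $\tilde y_j$ is undefined and the coupled $G$-system loses meaning, so even the framework needs capping arguments.

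The idea you are missing is that no state-space comparison is needed at all: since both $\hat{\mathbf y}$ and $\tilde{\mathbf x}$ solve \emph{linear} ODEs with explicit solutions, the paper compares their time derivatives as explicit functions of $t$. Concretely, it writes
\begin{equation}
\dfrac{d\hat{\mathbf x}}{dt}=e^{-\hat{\mathbf y}}\dfrac{d\hat{\mathbf y}}{dt}\preceq\dfrac{d\hat{\mathbf y}}{dt}=e^{Bt}\bigl[\mathbf b-\log q\,B\mathbf u\bigr]=e^{Bt}\,p\left(\beta A-\tfrac{\gamma}{q}I\right)\mathbf u,
\end{equation}
and compares this with $\frac{d\tilde{\mathbf x}}{dt}=e^{(\beta A-\gamma I)t}\,p(\beta A-\gamma I)\mathbf u$, using $q<1$ together with the nonnegativity of $A$ (so that $e^{-\gamma t/q}e^{\beta qAt}\preceq e^{-\gamma t}e^{\beta At}$ term by term in the exponential series, and $(\beta A-\tfrac{\gamma}{q}I)\mathbf u\preceq(\beta A-\gamma I)\mathbf u$); integrating from the common initial condition $p\mathbf u$, as in Lemma A.1 of LTE, gives the claim. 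The heterogeneous configurations that kill your pointwise estimate never enter, because the two derivative expressions are ordered globally in $t$ rather than through a touching-point argument. Note also that the paper restricts this lemma to the above-threshold regime $\beta_e>\tau$, a restriction your scheme did not impose but also could not exploit.
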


\begin{proof}
	We focus on the above-the-threshold behavior, i.e., we assume
	
	\[
	\beta_{e}=\frac{\beta}{\gamma}>\frac{1}{q^{2}\lambda_{1}}=\tau.
	\]

Then, because $q<1$, we have that $\beta_{e}=\frac{\beta}{\gamma}>\frac{1}{\lambda_{1}}$.
Following Lemma A.1 by LTE (see Eq. (29) in \cite{Lee_et_al}), since the initial conditions are
the same for the bound solution and the linearized process, i.e., $\mathbf{\hat{x}}\left(0\right)=\tilde{\mathbf{x}}\left(0\right)=\mathbf{x}_{0}=p\mathbf{u}$,
it is enough to prove that
	
	\begin{equation}
		\dfrac{d\mathbf{\hat{x}}\left(t\right)}{dt}\preceq\dfrac{d\tilde{\mathbf{x}}\left(t\right)}{dt}
	\end{equation}
for all $t\geq0$. Let us remind that $\mathbf{\hat{x}}\left(t\right)=1-e^{-\mathbf{\hat{y}}}$; then
we have
	
	\begin{equation}
		\dfrac{d\mathbf{\hat{x}}\left(t\right)}{dt}=e^{-\mathbf{\hat{y}}}\dfrac{d\mathbf{\hat{y}}\left(t\right)}{dt}\preceq\dfrac{d\hat{\mathbf{y}}\left(t\right)}{dt},
	\end{equation}
for all $t\geq0$, where the inequality follows from $e^{-\hat{y}_{i}}<1$
for all $\hat{y}_{i}\in[0,\infty]$. By (\ref{eq:Solution}) we have
	
	\[
	\dfrac{d\mathbf{\hat{y}}\left(t\right)}{dt}=e^{Bt}B\left[B^{-1}{\bf b}-\log q\,{\bf u}\right]=e^{Bt}\left[{\bf b}-\log q\,B{\bf u}\right],
	\]
where $B$ and ${\bf b}$ are given by Eq. (\ref{B}) and Eq. (\ref{b}), respectively. Since
	
	\[
	{\bf b}-\log q\,B{\bf u}=p\left(\beta A-\frac{\gamma}{q}I\right){\bf u},
	\]
we have
	
	\begin{equation*}
	\dfrac{d\mathbf{\hat{y}}\left(t\right)}{dt}=e^{\left(\beta qA-\frac{\gamma}{q}I\right)t}\left[p\left(\beta A-\frac{\gamma}{q}I\right){\bf u}\right]\preceq e^{\left(\beta A-\gamma I\right)t}\left[p\left(\beta A-\gamma I\right){\bf u}\right]=\dfrac{d\tilde{\mathbf{x}}\left(t\right)}{dt},
	\end{equation*}
where the last inequality is justified by the fact that $e^{\left(\beta q\lambda_{i}-\frac{\gamma}{q}\right)t}<e^{\left(\beta\lambda_{i}-\gamma I\right)t}$
and $\left(\beta A-\frac{\gamma}{q}I\right){\bf u}\preceq\left(\beta A-\gamma I\right){\bf u}$.
This finally proves the result.
\end{proof}

\begin{remark}
	It is well-known that the linearized SIS model approaches the exact
	solution only when $t\rightarrow0$. This is the case particularly
	when $\beta$ and $\gamma$ are both small, and $\beta>\gamma$ as
	can be seen in Fig. \ref{Toy_evolution}(a), which refers to the toy network in Fig. \ref{Toy network} with parameters $\beta=0.004$, $\gamma=0.001$
	and $p=1/12.$ Notice that we have used logarithmic scale in the time
	to specially highlight the short time behavior. In this case, it can
	be seen that the upper bound found here also coincides with the exact
	solution for short times. For longer times, the linearized solution
	quickly diverges, while our upper bound behaves appropriately and
	converges to the steady state almost as the same time as the exact
	solution. When, $\beta<\gamma$ and both values are not so small,
	the situation is pretty different from the previous one for the linearized
	model. In this case, not even for very small times, the linearized
	solution coincides with the exact one as can be seen in Fig. \ref{Toy_evolution}(b)
	for $\beta=0.02$, $\gamma=0.03$ and $p=5/12.$ The upper bound found
	here coincides with the exact solution for a relatively long period
	of time and then converges to a steady state far from the 100\% of
	contagion as expected for these given set of parameters.
\end{remark}

\begin{figure}[!htbp]
	\begin{centering}
		\subfloat[]{\begin{centering}
				\includegraphics[width=0.47\textwidth]{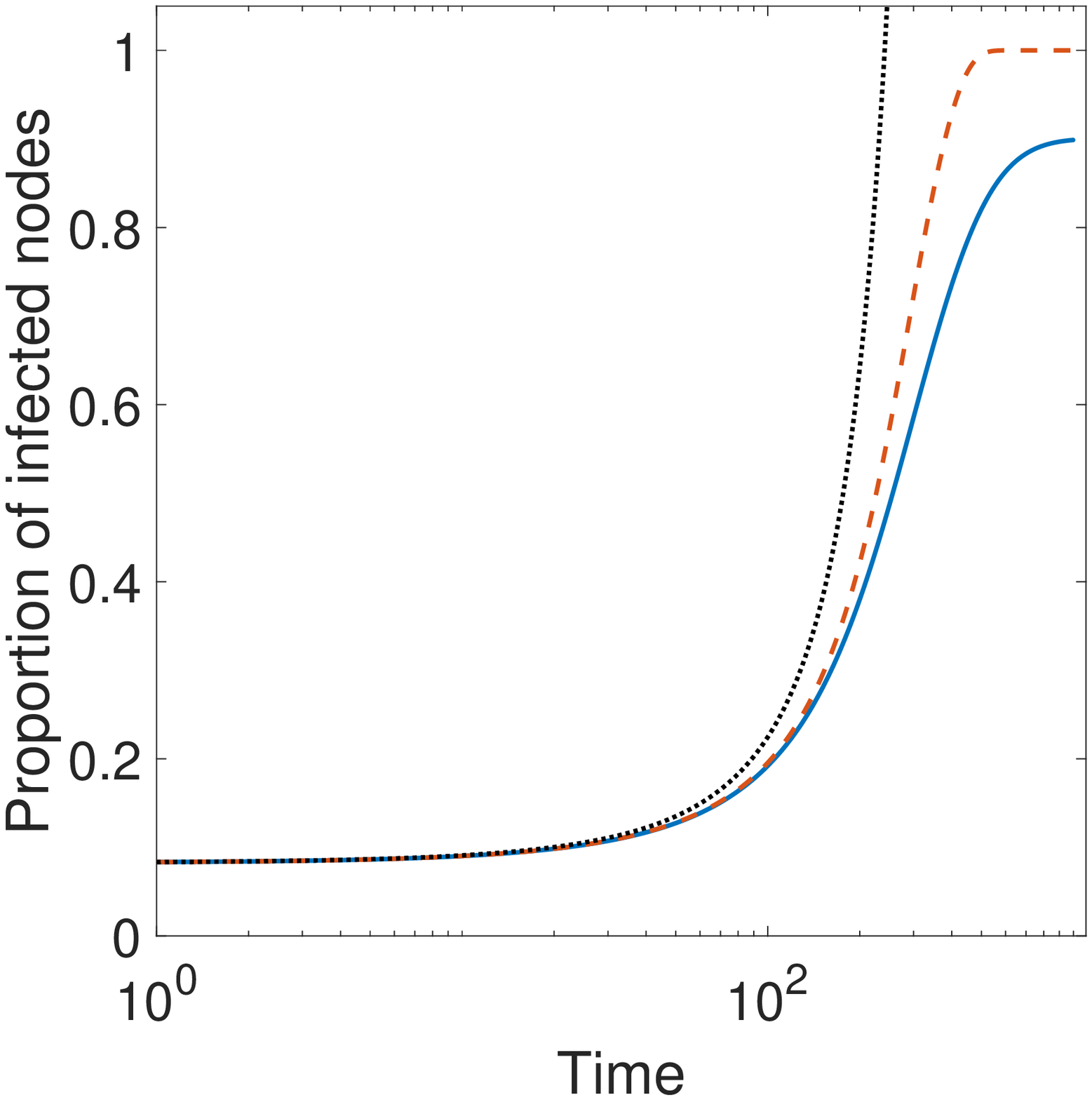}
				\par\end{centering}
		}\subfloat[]{\begin{centering}
				\includegraphics[width=0.50\textwidth]{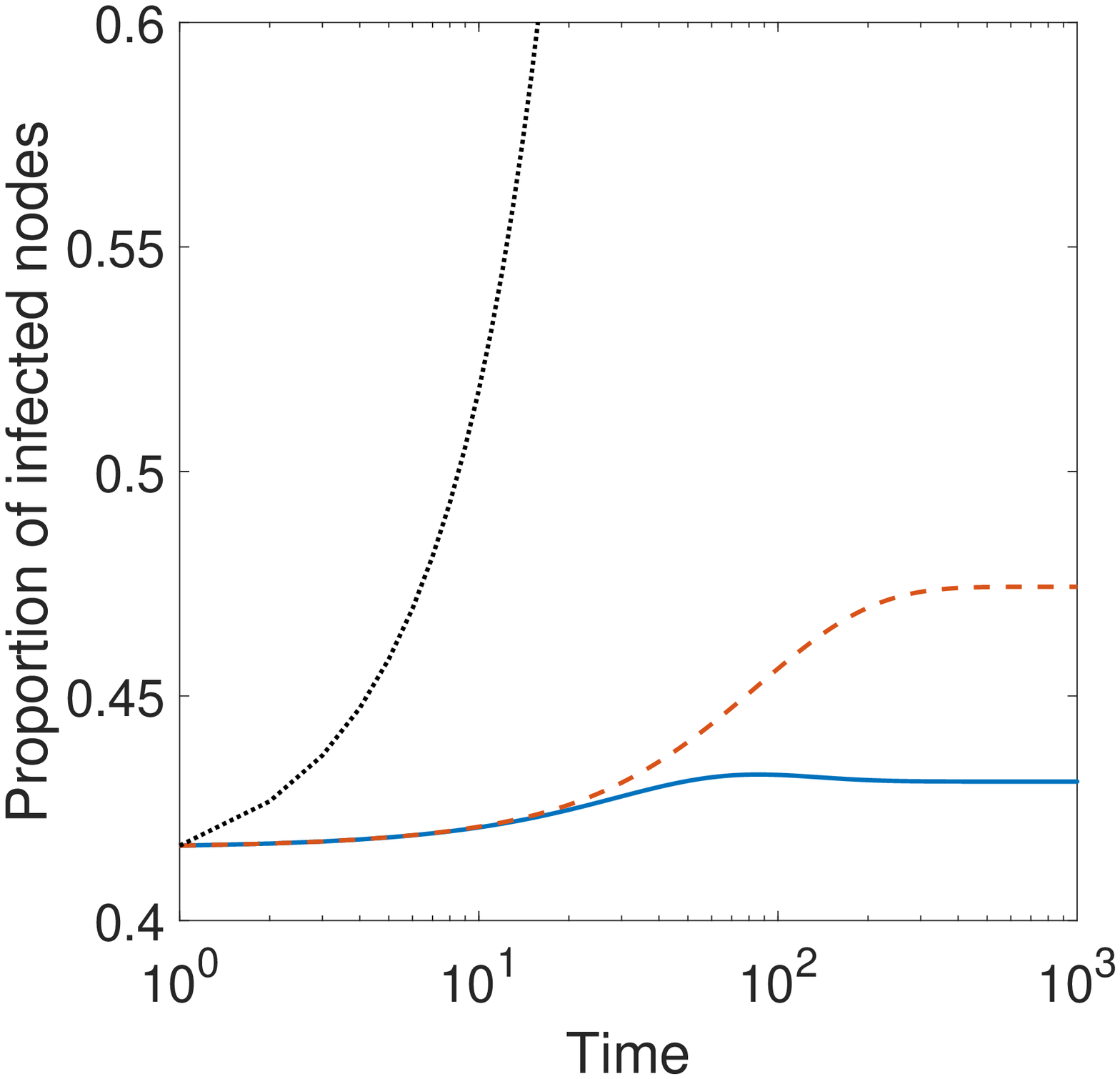}
				\par\end{centering}
		}
		\par\end{centering}
	\caption{Time evolution of proportion of infected nodes in a SIS epidemic on
		the toy network illustrated in Fig. \ref{Toy network}. In blue (continuous)
		line we plot the exact SIS solution, in red (broken) line the solution
		provided by the upper bound found here and in black (dotted) line
		the solution to the linearized model. The time (x-axis) is shown in
		logarithmic scale. In (a) we have $\beta=0.004$, $\gamma=0.001$
		and $p=1/12.$ In (b) $\beta=0.02$, $\gamma=0.03$ and $p=5/12.$}
	
	\label{Toy_evolution}
\end{figure}

\begin{remark}
	We can summarize the strength points of this bound as follows: (i)
	it is an upper bound for any $\beta$, $\delta$ and $p$, (ii) it
	extends to SIS model the upper bound for SI model by LTE, and so it
	generalizes it; (iii) it captures the presence of a threshold $\tau$
	consistent with classic models; (iv) for small initial probabilities
	$p$ it gives a close approximation of the exact solution and a very
	accurate description of the real spreading phenomenon, both above
	or under the threshold. At the same time, it should be taken into
	account that, for big initial probabilities $p$, the solution
	still remains an upper bound, but the approximation gets worse as
	$p\to1$. Moreover, the bound predicts that $\hat{{\bf x}}(t)\to 0$, below the
	threshold, only for small $p$.
\end{remark}

\subsection{Epidemic spread and network communicability}

One of the main goals of network theory is to understand dynamical
processes in terms of structural network properties. Here we use the
upper bound found in this section for the SIS model to understand
how the communicability between nodes in a network captures the propagation
of a disease through the nodes and edges of the network. We start
by setting $B=q\beta A-\frac{\gamma}{q}I=-\frac{\gamma}{q}\left(I-q^{2}\beta_{e}A\right)=:-\frac{\gamma}{q}D$.
Then we can rewrite the vector $B^{-1}{\bf b}-\log q\,{\bf u}$ in
the following way:

\[
B^{-1}{\bf b}-\log q\,{\bf u}=\frac{p}{q}\left[I-p(I-q^{2}\beta_{e}A)^{-1}\right]{\bf u}=\frac{p}{q}\left[I-pD^{-1}\right]{\bf u},
\]
where $D=I-q^{2}\beta_{e}A.$ In this way, solution (\ref{eq:Solution})
becomes

\begin{equation}
	\hat{\mathbf{y}}\left(t\right)=\dfrac{p}{q}\left[e^{-\tfrac{\gamma}{q}Dt}-I\right]\left[I-pD^{-1}\right]{\bf u}-\log q \, {\bf u}.
\end{equation}

Let $\mathbf{v}=\left[I-pD^{-1}\right]{\bf u}$ with elements $v_{i}=\sum_{j=1}^{n}\left[I-pD^{-1}\right]_{ij}$
and let us set

\begin{equation}
	C=\left\{ \begin{array}{cc}
		\max_{i}v_{i} & \textnormal{if}\ \beta_{e}>\tau,\\
		\min_{i}v_{i} & \textnormal{if}\ \beta_{e}<\tau.
	\end{array}\right.
\end{equation}

We then have the following result.
\begin{lemma}
	The probability that a node $i\in V$ in a graph is infected at a
	given time $t$ is bounded by its total communicability $\mathscr{R}_{i}$
	\cite{Benzi2013} as
	
	\begin{equation}
		x_{i}\left(t\right)\leq1-q\exp\left(-C\dfrac{p}{q}\dfrac{\mathscr{R}_{i}\left(\varrho\right)-e^{\tfrac{\gamma}{q}t}}{e^{\tfrac{\gamma}{q}t}}\right),
	\end{equation}
	where $\mathscr{R}_{i}\left(\varrho\right)=\left(e^{\varrho A}{\bf u}\right)_{i}$
	and $\varrho=q\beta t$.
\end{lemma}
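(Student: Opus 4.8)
The plan is to derive the bound from the ordering $\mathbf{x}(t)\preceq\hat{\mathbf{x}}(t)$ already established in the main theorem, together with the closed form of $\hat{\mathbf{y}}$, so that everything reduces to a single componentwise comparison. Since $\hat{x}_i(t)=1-e^{-\hat{y}_i(t)}$ and $y\mapsto 1-e^{-y}$ is increasing, it suffices to produce an upper bound on each $\hat{y}_i(t)$; exponentiating that bound and using $q=e^{\log q}$ then reproduces the stated form $x_i(t)\le 1-q\exp(\cdots)$.

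First I would rewrite the generator so that the communicability appears explicitly. Writing $E:=e^{\gamma t/q}$ and recalling $\varrho=q\beta t$, and using $D=I-q^{2}\beta_{e}A$ with $\beta_{e}=\beta/\gamma$, the identity commutes out of the exponential, giving
$$e^{-\frac{\gamma}{q}Dt}=e^{-\frac{\gamma}{q}t}\,e^{q\beta t A}=\tfrac{1}{E}\,e^{\varrho A}.$$
Substituting this into $\hat{\mathbf{y}}(t)=\frac{p}{q}\big[e^{-\frac{\gamma}{q}Dt}-I\big]\mathbf{v}-\log q\,\mathbf{u}$ with $\mathbf{v}=[I-pD^{-1}]\mathbf{u}$, and noting that the row sums of $e^{\varrho A}$ are exactly the total communicabilities, i.e. $\big([e^{-\frac{\gamma}{q}Dt}-I]\mathbf{u}\big)_i=\tfrac{1}{E}\mathscr{R}_i(\varrho)-1=\frac{\mathscr{R}_i(\varrho)-E}{E}$, turns the target into the single comparison
$$\big(\big[e^{-\frac{\gamma}{q}Dt}-I\big]\mathbf{v}\big)_i\ \le\ C\,\frac{\mathscr{R}_i(\varrho)-E}{E}.$$

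The bulk of the work is this comparison, which I would phrase as replacing $\mathbf{v}$ by the constant vector $C\mathbf{u}$. The off-diagonal entries of $e^{-\frac{\gamma}{q}Dt}=\frac{1}{E}e^{\varrho A}$ are nonnegative, since every power of $A$ has nonnegative entries, so for those entries the substitution $v_j\mapsto C$ is justified in the correct direction provided $C$ bounds the $v_j$ on the correct side; this is precisely why $C$ is taken as $\max_i v_i$ above the threshold and $\min_i v_i$ below it, matching the sign of $v_j-C$ to the growth/decay regime. Collecting the terms into $C\big([e^{-\frac{\gamma}{q}Dt}-I]\mathbf{u}\big)_i$ yields the displayed inequality, and exponentiating $\hat{y}_i(t)\le C\frac{p}{q}\frac{\mathscr{R}_i(\varrho)-E}{E}-\log q$ gives $x_i(t)\le\hat{x}_i(t)=1-e^{-\hat{y}_i(t)}\le 1-q\exp\!\big(-C\frac{p}{q}\frac{\mathscr{R}_i(\varrho)-E}{E}\big)$, as claimed.

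The main obstacle is the \emph{diagonal} contribution in this comparison. Unlike the off-diagonal entries, the diagonal of $e^{-\frac{\gamma}{q}Dt}-I$ equals $\frac{1}{E}(e^{\varrho A})_{ii}-1$, whose sign is not fixed: it depends on whether $(e^{\varrho A})_{ii}$ exceeds $E$, and this can fail in the early-time transient even above the threshold, so naive entrywise order-preservation is delicate. The genuine content is therefore to show that the threshold-dictated choice of $C$ controls this diagonal term in the regime where the bound is informative, i.e. $\mathscr{R}_i(\varrho)>E$. I would concentrate the effort here, either by using the spectral identity $(e^{\varrho A})_{ii}=\sum_k e^{\varrho\lambda_k}(\psi_k)_i^2$ to compare $(e^{\varrho A})_{ii}$ with $E$ through the sub-/super-threshold relation $q^{2}\beta_e\lambda_1\gtrless 1$, or by absorbing the diagonal discrepancy into the slack already present in $\mathbf{x}(t)\preceq\hat{\mathbf{x}}(t)$.
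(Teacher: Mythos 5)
You follow the same route as the paper: insert the closed form $\hat{\mathbf{y}}(t)=\frac{p}{q}\bigl[e^{-\frac{\gamma}{q}Dt}-I\bigr]\mathbf{v}-\log q\,\mathbf{u}$, use $e^{-\frac{\gamma}{q}Dt}=e^{-\gamma t/q}e^{\varrho A}$ so that row sums become total communicabilities, replace $\mathbf{v}$ by $C\mathbf{u}$, and exponentiate. The single step you leave open, namely
\begin{equation*}
\Bigl(\bigl[e^{-\frac{\gamma}{q}Dt}-I\bigr]\mathbf{v}\Bigr)_i\;\le\;C\,\Bigl(\bigl[e^{-\frac{\gamma}{q}Dt}-I\bigr]\mathbf{u}\Bigr)_i,
\end{equation*}
is exactly the step the paper does not argue either: its proof simply asserts $\hat{\mathbf{y}}(t)\preceq C\frac{p}{q}\bigl[e^{-\frac{\gamma}{q}Dt}-I\bigr]\mathbf{u}-\log q\,\mathbf{u}$ ``based on the previous definitions'' and then evaluates the right-hand side. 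So you have not missed an idea that the paper supplies; you have isolated the point at which the paper supplies none.

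Your diagnosis of that point is correct, and the obstacle is worse than delicate: the intermediate inequality is false in general, so neither your sketch nor the paper's proof can be completed without modifying the statement. Writing $w_j=C-v_j$, $E=e^{\gamma t/q}$ and $G=e^{\varrho A}$, the required comparison is $\frac{1}{E}\sum_j G_{ij}w_j\ge w_i$. Take the path on three vertices with edges $(1,2)$ and $(2,3)$, above the threshold: by symmetry the center attains the maximum, so $w_2=0$ and $w_1=w_3>0$, and for the end node the requirement becomes $G_{11}+G_{13}=\cosh(\sqrt{2}\varrho)\ge E$, which fails for all small $t>0$ because $\cosh(\sqrt{2}q\beta t)=1+O(t^2)$ while $e^{\gamma t/q}=1+\frac{\gamma}{q}t+O(t^2)$; hence $\hat{y}_1(t)>\bar{y}_1(t)$ near $t=0$. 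Your fallback of absorbing the discrepancy into the slack of $\mathbf{x}\preceq\hat{\mathbf{x}}$ cannot work either, because the lemma's conclusion itself fails there: both sides equal $p$ at $t=0$, the exact model gives $\dot{x}_1(0)=p(q\beta k_1-\gamma)$, while the claimed bound has initial derivative $Cp(q\beta k_1-\gamma/q)$; with $\beta=\gamma=1$, $p=0.1$, $q=0.9$ one finds $C\approx 1.84$ and these derivatives are $-0.01$ versus approximately $-0.039$, so $x_1(t)$ exceeds the bound for small $t>0$. A correct version needs an extra hypothesis --- e.g.\ above the threshold, restricting to times at which $e^{-\gamma t/q}\bigl(e^{\varrho A}\bigr)_{ii}\ge 1$ for every $i$ (true for large $t$, since $q^2\beta\lambda_1>\gamma$), under which your off-diagonal argument together with $\frac{1}{E}G_{ii}w_i\ge w_i$ closes the proof. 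As stated, your proposal is incomplete, but the incompleteness coincides exactly with a genuine gap in the paper's own argument.
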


\begin{proof}
	Based on the previous definitions we have that
	\begin{equation}
		\mathbf{y}\left(t\right)\preceq\hat{\mathbf{y}}\left(t\right)\preceq C\frac{p}{q}\left[e^{-\tfrac{\gamma}{q}Dt}-I\right]{\bf u}-\log q \, {\bf u}\eqqcolon\bar{\mathbf{y}}\left(t\right).
	\end{equation}

For a given node $i\in V$ we have
	
	\begin{equation}
		\begin{split}{\bar y}_{i}\left(t\right) & =C\frac{p}{q}\left[\left(e^{-\frac{\gamma}{q}Dt}{\bf u}\right)_{i}-1\right]-\log q\\
			& =C\frac{p}{q}\dfrac{\left(e^{q\beta At}{\bf u}\right)_{i}-e^{\frac{\gamma}{q}t}}{e^{\frac{\gamma}{q}t}}-\log q\\
			& =C\frac{p}{q}\dfrac{\mathscr{R}_{i}\left(\varrho\right)-e^{\frac{\gamma}{q}t}}{e^{\frac{\gamma}{q}t}}-\log q,
		\end{split}
	\end{equation}
	from which the solution immediately follows.
\end{proof}

We should notice that, if $\gamma\to0$ then $C\to 1$ and ${\bar x}_{i}(t)\to1-qe^{-\frac{p}{q}({\mathscr{R}}_{i}-1)}$,
which equals the LTE \cite{Lee_et_al} solution of the SI model. We
should also remark that, in this approximation, all the structural information
determining the dynamics of the SIS process is stored in the variable
$\mathscr{R}_{i}$. We can interpret structurally this index as follow.
Let $G\left(\varrho\right)=\exp\left(\varrho A\right)$. Then, $\mathscr{R}_{i}\left(\varrho\right)=\sum_{j=1}^{n}G_{ij}\left(\varrho\right),$
where $G_{ij}\left(\varrho\right)=\left(\exp\left(\varrho A\right)\right)_{ij}=\sum_{k=0}^{\infty}\dfrac{\left[\left(\varrho A\right)^{k}\right]_{ij}}{k!}$
\cite{MF_4,MF_2,MF_3,MF_1}. The important thing here is that $\left(A^{k}\right)_{ij}$
counts the number of walks of length $k$ between the nodes $i$ and
$j$. A walk of length $k$ is any sequence of (not necessarily different)
vertices $v_{1},v_{2},\ldots,v_{k},v_{k+1}$ such that for each $i=1,2,\ldots,k$
there is an edge from $v_{i}$ to $v_{i+1}$ \cite{Estrada2012book}.
When $i=j$ the walk is known as closed. In the expression of ${\bar y}_{i}\left(t\right)$,
$\mathscr{R}_{i}\left(\varrho\right)$ describes every trajectory
of the infective particle starting at the node $i$ (and ending elsewhere)
at a given time and under the fixed initial conditions. It is clear
that with all the epidemiological parameters fixed, ${\bar y}_{i}\left(t\right)$
depends linearly on $\mathscr{R}_{i}\left(\varrho\right).$

\subsection{Network capacity to reroute goods/items/passengers}

We consider that on the same graph $\varGamma$ where the infective
particle is diffusing, there are other desirable diffusive processes
taking place, such as the diffusion of goods, items, and passengers, which move
the economy. We consider that such goods are diffusing through the
graph by means of all available walks connecting a pair of nodes $i$
and $j$ in exactly the same way as the infective particle is using
them. Indeed, there are approaches to modeling the traffic on networks
which use epidemiological models like SIS \cite{traffic_2,traffic_1,traffic_3}.

Therefore, let us find which routes are more probable for the infective particle to travel through. They will be the same ones for the goods/items/passengers moving in the network. For that we start by defining the following difference:
\begin{equation}
	\xi_{ij}\left(\varrho\right)\coloneqq G_{ii}\left(\varrho\right)+G_{jj}\left(\varrho\right)-2G_{ij}\left(\varrho\right),
\end{equation}
which represents the difference between all those walks that start
and end at the same vertex to those walks which go from one node to
another \cite{Benzi2013}. The first two terms then represent the
circulability of a diffusive particle around a given node, while the
last represents the transmissibility between two nodes \cite{bartesaghi2020risk}.
Before continuing we need a clarification here. In the definition
of $\xi_{ij}\left(\varrho\right)$ we are considering the parameter $\varrho$
used in SIS model of the disease propagation. However, we are expecting
that this parameter $\xi_{ij}\left(\varrho\right)$ captures the mobility
of goods/items/passengers in the network, not of the disease. In the particular
cases we are studying here we do not have an estimation for the parameter
$\varrho$ for the mobility of goods/items/passengers. Additionally, we have
the problem that the measure $\xi_{ij}\left(\varrho\right)$ is dependent
on $\varrho$. Therefore, to make comparable the results of the viral
spreading and the mobility of goods/items/passengers we use in the calculations
of the last the same parameter $\varrho$ as for the first. The theoretical
justification for this assumption is that we need $\varrho\ll1$ for
the mobility of goods/items/passengers to avoid congestion problems at the nodes
and the values used for the SIS dynamics fulfill this requirement.
In \cite{estrada2012communicability} it was proved the following
result.

\begin{lemma}
	Let $\xi_{ij}\left(\varrho\right)$ for $\varrho\in\mathbb{R}$ be
	the difference between the circulabilities of a diffusive particle
	around the nodes $i$ and $j$, and the transmissibility between both
	nodes. Then, $\xi_{ij}\left(\varrho\right)$ is a Euclidean distance
	between the corresponding nodes.
\end{lemma}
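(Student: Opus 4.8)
The plan is to realize $\xi_{ij}(\varrho)$ as the squared Euclidean distance between canonical position vectors attached to the nodes, so that its square root is literally the length of a difference of vectors in $\mathbb{R}^{n}$ and hence a Euclidean metric (the communicability distance in the sense of \cite{estrada2012communicability}). The whole argument is driven by one structural fact: because $A$ is real symmetric, $G(\varrho)=\exp(\varrho A)$ is symmetric \emph{positive definite} for every finite real $\varrho$, which lets us treat it as a Gram matrix.

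First I would record the spectral representation already used above, $A=M\Lambda M^{T}$ with $M$ orthogonal (columns $\psi_{k}$) and $\Lambda=\mathrm{diag}(\lambda_{1},\dots,\lambda_{n})$. Exponentiating gives $G(\varrho)=Me^{\varrho\Lambda}M^{T}=\sum_{k=1}^{n}e^{\varrho\lambda_{k}}\psi_{k}\psi_{k}^{T}$, so entrywise $G_{ij}(\varrho)=\sum_{k=1}^{n}e^{\varrho\lambda_{k}}(\psi_{k})_{i}(\psi_{k})_{j}$. Since $e^{\varrho\lambda_{k}}>0$ for all $k$ and all real $\varrho$, the matrix $G(\varrho)$ is positive definite, and in particular admits a factorization $G(\varrho)=X^{T}X$; concretely one may take $X=e^{\varrho\Lambda/2}M^{T}=G(\varrho)^{1/2}$.

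Next I would exhibit the embedding explicitly. For each node $i$ define the position vector $\mathbf{x}_{i}=\bigl(e^{\varrho\lambda_{1}/2}(\psi_{1})_{i},\dots,e^{\varrho\lambda_{n}/2}(\psi_{n})_{i}\bigr)^{T}\in\mathbb{R}^{n}$, i.e.\ the $i$-th column of $G(\varrho)^{1/2}$. A one-line computation gives $\mathbf{x}_{i}^{T}\mathbf{x}_{j}=\sum_{k}e^{\varrho\lambda_{k}}(\psi_{k})_{i}(\psi_{k})_{j}=G_{ij}(\varrho)$, so $G(\varrho)$ is exactly the Gram matrix of $\{\mathbf{x}_{i}\}$. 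Substituting into the definition of $\xi$ yields
\[
\xi_{ij}(\varrho)=G_{ii}+G_{jj}-2G_{ij}=\mathbf{x}_{i}^{T}\mathbf{x}_{i}+\mathbf{x}_{j}^{T}\mathbf{x}_{j}-2\mathbf{x}_{i}^{T}\mathbf{x}_{j}=\lVert\mathbf{x}_{i}-\mathbf{x}_{j}\rVert_{2}^{2},
\]
the squared Euclidean distance between the images of $i$ and $j$; therefore $\xi_{ij}(\varrho)^{1/2}=\lVert\mathbf{x}_{i}-\mathbf{x}_{j}\rVert_{2}$ is a genuine Euclidean distance, which is the asserted conclusion.

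Finally I would confirm that this is a bona fide metric rather than merely a nonnegative symmetric function. Symmetry and nonnegativity are immediate from the squared-norm form, and the triangle inequality is inherited verbatim from the Euclidean norm on $\mathbb{R}^{n}$. The only delicate axiom — and the step I expect to be the main obstacle — is the identity of indiscernibles: one must rule out distinct nodes mapping to the same point, i.e.\ show $\xi_{ij}(\varrho)=0\Rightarrow i=j$. This is precisely where positive definiteness is essential: $G(\varrho)^{1/2}=Me^{\varrho\Lambda/2}M^{T}$ is invertible, so its columns $\mathbf{x}_{i}$ are linearly independent and in particular pairwise distinct, forcing $\lVert\mathbf{x}_{i}-\mathbf{x}_{j}\rVert_{2}>0$ whenever $i\neq j$. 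Everything else reduces to the routine Gram-matrix calculation, so the substance of the proof is the spectral positivity of $\exp(\varrho A)$ that underwrites the embedding.
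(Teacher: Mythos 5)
Your proof is correct and is essentially the argument behind this lemma: the paper itself gives no proof, importing the result from the cited communicability-distance reference, where precisely this spectral decomposition $G(\varrho)=Me^{\varrho\Lambda}M^{T}$ and the resulting Gram-factorization of the positive definite matrix $e^{\varrho A}$ are used to write $\xi_{ij}(\varrho)=\lVert\mathbf{x}_{i}-\mathbf{x}_{j}\rVert_{2}^{2}$ for an explicit embedding of the nodes in $\mathbb{R}^{n}$, with positive definiteness giving the identity of indiscernibles. One cosmetic slip worth fixing: your $\mathbf{x}_{i}$ is the $i$-th column of $e^{\varrho\Lambda/2}M^{T}$, which is not the $i$-th column of $G(\varrho)^{1/2}=Me^{\varrho\Lambda/2}M^{T}$ but differs from it by the orthogonal factor $M$; since this is a rigid rotation it preserves all inner products and distances, so the argument is unaffected.
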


We should recall that both terms, circulability and transmissibility,
contribute positively to the infection propagation through: $\mathscr{R}_{i}\left(\varrho\right)=G_{ii}\left(\varrho\right)+G_{ij}\left(\varrho\right)+\sum_{k\neq i\neq j}^{n}G_{ik}\left(\varrho\right)$.
Therefore, we aim here at the following task:
\begin{itemize}
	\item How to decrease significantly $\mathscr{R}_{i}\left(\varrho\right)$,
	and consequently ${\bar x}_{i}\left(t\right)$, without increasing
	significantly $\xi_{ij}\left(\varrho\right)$, and consequently minimally
	affecting the network capacity to diffuse goods/items/passengers?
\end{itemize}
Although $\xi_{ij}\left(\varrho\right)$ could be a good proxy for
the network capacity of transporting goods/items/passengers we should be aware
that all the transport taking place in a network occurs through the paths
connecting two nodes. That is, $\xi_{ij}\left(\varrho\right)$ does
not necessarily indicate the route followed by an item from the node $i$ to the
node $j$. For finding such routes we need a geometrization of the
graph. This is carried out by defining a length space on it \cite{Geometrization_2,Geometrization_1}.
Let us consider $e=(i,j)$ as a compact 1-dimensional manifold with boundary
$\partial e=i\cup j$. Let the edge $e=(i,j)$ be given the $\xi_{ij}\left(\varrho\right)$
metric such that
\begin{equation}
	e_{ij}\underset{isom}{\cong}\left\{ \begin{array}{cc}
		\left[0,\xi_{ij}\left(\varrho\right)\right] &\ \left(i,j\right)\in E\\
		0 &\ \ \left(i,j\right)\notin E.
	\end{array}\right.
\end{equation}

We now extend the metric on the edges of $\varGamma$ via infima of
lengths of curves in the geometrization of $\varGamma$. Then, the
network becomes a metrically length space, which is locally compact,
complete and geodetic \cite{Geometrization_2}. Now define the ``shortest
diffusive path length'' as:
\begin{equation}
	\mathscr{C}_{ij}\left(\varGamma, \varrho\right)\coloneqq\min_{P,ij}\sum_{\substack{\left(i,j\right)=e\in E\\
			e\in P
		}
	}\xi_{ij}\left(\varrho\right),
\end{equation}
where $P$ is a path in $\varGamma$, i.e., a walk with repetition
neither of vertices nor of edges, and the minimum is taken among all
paths connecting the corresponding pairs of vertices. We can now define
the capacity of a network to reroute goods/items/passengers after the removal
of an edge $e$ by:
\begin{equation}
	\varDelta\bar{\mathscr{C}}\left(\varGamma -e, \varrho\right)=\dfrac{\bar{\mathscr{C}}\left(\varGamma-e, \varrho\right)-\bar{\mathscr{C}}\left(\varGamma, \varrho\right)}{\bar{\mathscr{C}}\left(\varGamma, \varrho\right)},
\end{equation}
where $\mathcal{\mathscr{\bar{C}}}\left(\varGamma, \varrho\right)$ is the mean shortest communicability path (SCP), that is the average of $\mathscr{C}_{ij}\left(\varGamma, \varrho\right)$ over all shortest diffusive
paths connecting pairs of nodes in $\varGamma$, and $\varGamma-e$ is the graph from which the edge $e$ has been removed.

\subsection{Implementation of an edge-removal strategy}

We are always interested in nontrivial edge removals here, i.e., those
that do not disconnect the graph. Then, to respond to the main query
formulated in the previous subsection we will consider edge-removal
strategies that decrease significantly $\mathscr{R}_{i}\left(\varrho\right)$,
but do not increase significantly $\varDelta\bar{\mathscr{C}}\left(\varGamma -e,\varrho\right).$
It is obvious that any strategy that increases the relative communicability
between two vertices $G_{ij}\left(\varrho\right)$ will necessarily
drops $\xi_{ij}\left(\varrho\right).$ Unfortunately, it will also
increase $\mathscr{R}_{i}\left(\varrho\right).$ The obvious strategy
seems to drop $G_{ii}\left(\varrho\right)$ so that both $\mathscr{R}_{i}\left(\varrho\right)$
and $\varDelta\bar{\mathscr{C}}\left(\varGamma -e,\varrho\right)$ diminish their values.
However, very frequently dropping $G_{ii}\left(\varrho\right)$ also
decreases $G_{ij}\left(\varrho\right).$ Therefore, we cannot foresee
at first hand a strategy that fulfill both requirements and we then
implement a computational approach to investigate the problem.
First, we start by defining the following term:
\begin{equation}
	\mathcal{O}\left(e,t\right)\coloneqq\max_{e\in E}\dfrac{\varDelta t^{\star}}{\varDelta\bar{\mathscr{C}}\left(\varGamma -e,\varrho\right)},
\end{equation}
where $\varDelta t^{\star}=\left(t^{\star}\left(\varGamma-e\right)-t^{\star}\left(\varGamma\right)\right)/t^{\star}\left(\varGamma\right)$
and $t^{\star}$ is the time at which every node in the network is
infected, i.e., the steady state of the SIS process, where the maximum
is obtained among all the edges of the graph.

We must be aware of an important characteristic of this process. The
term $\varDelta\bar{\mathscr{C}}\left(\varGamma -e,\varrho\right)$ depends on $t$, which
means that $\mathcal{O}\left(e,t\right)$ is different for different
times. This means that the process of edge-removal is time-dependent,
and we should go removing edges as the time of the evolution of the
epidemic goes on. This is a very realistic scenario and reflect some
of the difficulties found in the current COVID-19 pandemics, where
the measures taken at a given time are not necessarily the optimal
ones at another. In Algorithm \ref{algorithm} we provide the pseudo-code of the current
implementation.

\begin{algorithm}
	\DontPrintSemicolon
	\KwIn{The original network: $\Gamma=(V,E)$, $v_i\in V$ and $(i,j)\in E$;\\ $\ \quad \qquad$ The downdating times: $T_{k}=k\cdot a,\ k\in [0,m]$, $a$ fixed time step;\\ $\ \quad \qquad$ The epidemic level $\varepsilon$.} 
	\KwOut{The downdated network after $m$ steps: $\Gamma_{m}=(V,E_{m})$}
	set $\Gamma_{0}=\Gamma$, $A_{0}=A$, $X_{0}(t)=X(t)$ and $t_{0}=\min \left( t: X_{0}(t)\geq 1-\varepsilon \right)$\;
	\For{$k \in [1,m]$}{
		$A_{k}^{(i,j)} \gets A_{k-1}-A_{k-1}\cdot U_{ij}U_{ji}^{T}, \ \forall i,j \in E_{k-1}$\;
		generate $\Gamma_{k}^{(ij)}$ with adjacency matrix $A_{k}^{(ij)}$, $\forall i,j \in E_{k-1}$\;
		if ${\rm count.components}\,\big( \Gamma_{k}^{(ij)}\big)\neq 1$: $A_{k} \gets A_{k-1}$ and stop; else\, (NULL)\;
		$X_{k}^{(ij)}(t) \gets X_{k-1}(t)$, $\forall i,j \in E_{k-1}$\;
		for $t\geq T_{k}$ compute $X_{k}^{(ij)}(t)$ on network  $\Gamma_{k}^{(ij)}$ with $X_{k}^{(ij)}(T_{k})=X_{k-1}(T_{k})$, $\forall i,j \in E_{k-1}$\;
		compute $t_{k}^{(ij)}=\min \left( t: X_{k}^{(ij)}(t)\geq 1-\varepsilon \right), \forall i,j \in E_{k-1}$\;
		compute $\varDelta t_{k}^{(ij)}=t_{k}^{(ij)}-t_{k-1}$\;
		compute $\varDelta {\bar {\mathscr{C}}}_{k}^{(ij)}=\frac{{\bar {\mathscr{C}}}^{(ij)}_{k}(a)-{\bar {\mathscr{C}}}_{k-1}(a)}{{\bar {\mathscr{C}}}_{k-1}(a)}$ where ${\bar {\mathscr{C}}}^{(ij)}_{k}(a)$ is the mean SCP on network $\Gamma_{k}^{(ij)}$ at time $T_{k}=ka$ and ${\bar {\mathscr{C}}}_{k-1}(a)$ is the mean SCP on network $\Gamma_{k-1}$\;
		select $({i_k},{j_k})\in E_{k-1}$ corresponding to $\max_{{i},{j}}\frac{\varDelta t_{k}^{(ij)}}{\varDelta {\bar C}_{k}^{(ij)}}$\;
		remove edge $({i_k},{j_k})\in E_{k-1}$ from $\Gamma_{k-1}$ and generate $\Gamma_{k}$\;
	}
	\Return{$\Gamma_{m}$ and $X_{m}(t)$}\;
	\caption{{\sc Optimal Downdating}}
	\label{algorithm}
\end{algorithm}

\subsubsection{Toy network example}

It is time now to give some numbers and we will start by analyzing
the toy model illustrated in Fig. \ref{Toy network}. The process
evolves as follow. We consider the time evolution of the SIS model
in which we observe the evolution of the ratio of infected nodes with time. At a given time, we make the following plot. For every potential edge-removal
of interest, here made for every of the 16 edges of the graph, we
plot $\varDelta\bar{\mathscr{C}}\left(\varGamma -e,\varrho\right)$ vs. $\varDelta t^{\star}$ in a box as the one illustrated in Fig. \ref{Toy_plots}(a)
for $t=150.$ The points in the plot correspond to the effects produced
by removing the corresponding edge. The radii and color of these points
are proportional to the values of $\mathcal{O}\left(e,t=150\right)$. It
can be clearly seen that there are two groups of edges. In the upper-right
corner we have all the edges whose removal change very much the capacity
of the network to reroute goods/items/passengers. In the opposite corner we have
all those edges whose removal increase the time for infecting the
whole population with minimum disruption of network operational capacity.
We can select here a given number of edges to be removed in dependence
of other factors, of economic or logistical nature. We remove here
one edge at a time. In this case we select the edge with the largest
$\mathcal{O}\left(e,t=150\right)$ which is the edge $\left(2,3\right).$
This single removal increases the time at which the SIS dynamics infects the $90\%$ of
the whole population by 17.1\% with a minimum change in the network
capacity to operate, i.e., $\varDelta\bar{{\mathscr{C}}}\left(\varGamma -e, \varrho\right)\approx0.72\%$.
This edge removal produces a change in the trajectory of the infection
as observed in the plot Fig. \ref{Toy_plots}(d), which is marked
by the point $(2,3)$, which represents the edge removed.

We now continue observing the evolution of the epidemic until we decide
the next intervention. In this case we decided to do it at $t=300$
(we simply use similar periods of time here to make the interventions,
in a real-life situation this can be done at irregular intervals).
Notice that the value of $\varDelta\bar{{\mathscr{C}}}\left(\varGamma -e,\varrho\right)$ is
dependent on the time at which we decide to make the plot. The new
situation is observed in the plot Fig. \ref{Toy_plots}(b) where the
model informs us that the next best cut should be made at edge $\left(2,4\right)$.
The combined interventions of cutting edges $(2,3)$ and $(2,4)$
increases the time to reach 90\% of infected population by 53.8\%.
If we translate this into days, for instance, it means to gain almost
54 days out of 100, which is very significant. Now the capacity of
the network to operate has drop by 3.0\%. The trajectory of the infection
changes again at the point $(2,4)$ of the plot in Fig. \ref{Toy_plots}(d).

In Fig. \ref{Toy_plots}(c) we illustrate the result of the third intervention
at $t=450$, which indicates that the next cut should be made to the
edge $\left(3,4\right).$ The combined interventions which have removed
three edges out of 16 in this toy network increases the time for infecting the $90\%$ of
the whole network by 160.4\%(!). That is, by removing only 18.75\%
of edges, which does not disconnect the graph, we have more than duplicated
the time that we now have to take actions during the epidemic, from 381
time units to 992 ones. All this by dropping only in 5.34\% the total
capacity of the network to operate in relation to normal conditions.
The epidemic now follows the trajectory from the point $\left(3,4\right)$
in the plot in Fig. \ref{Toy_plots}(d).

\begin{figure}[H]
	\subfloat[]{\begin{centering}
			\includegraphics[width=0.5\textwidth]{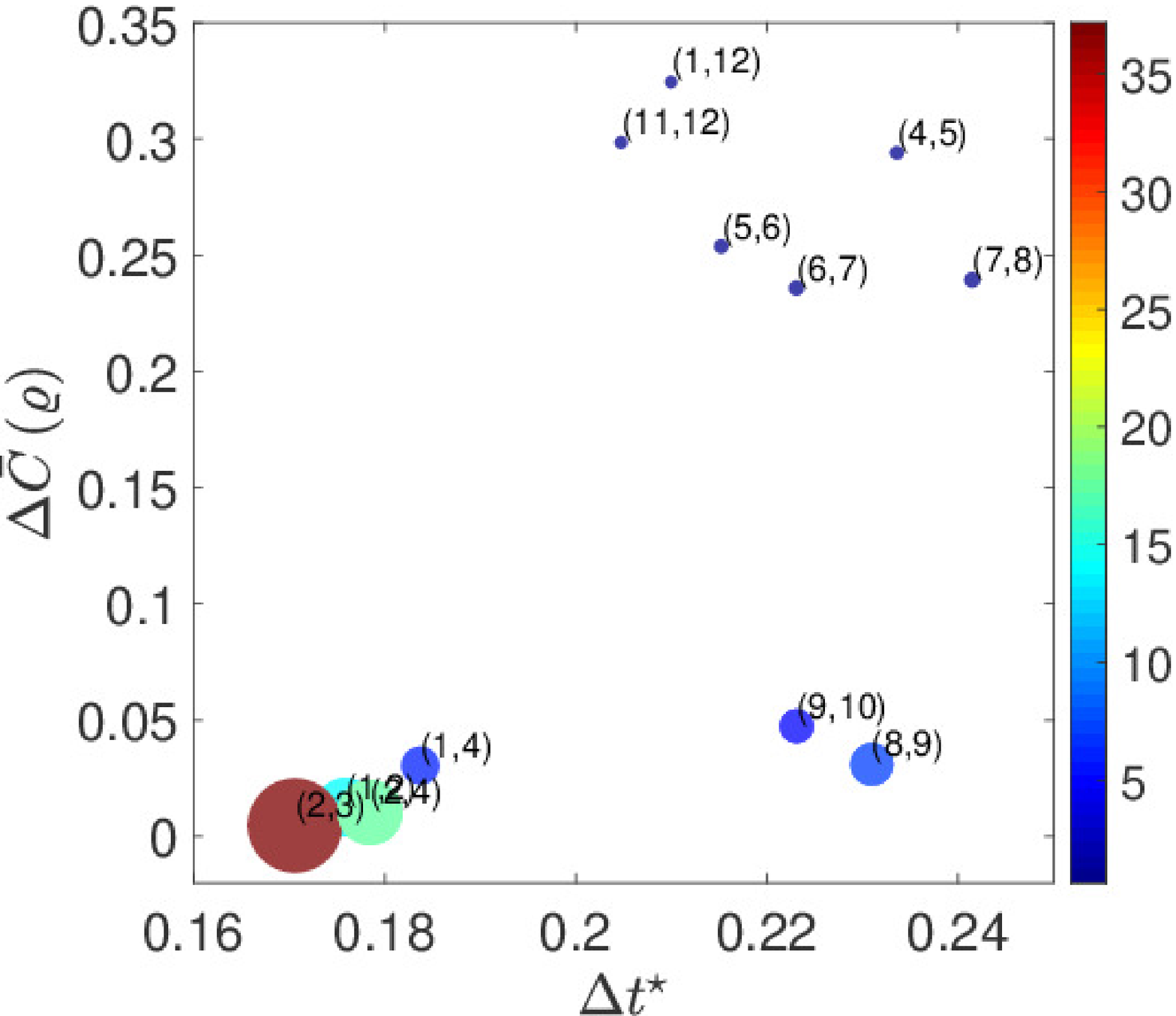}
			\par\end{centering}
		
	}\subfloat[]{\begin{centering}
			\includegraphics[width=0.48\textwidth]{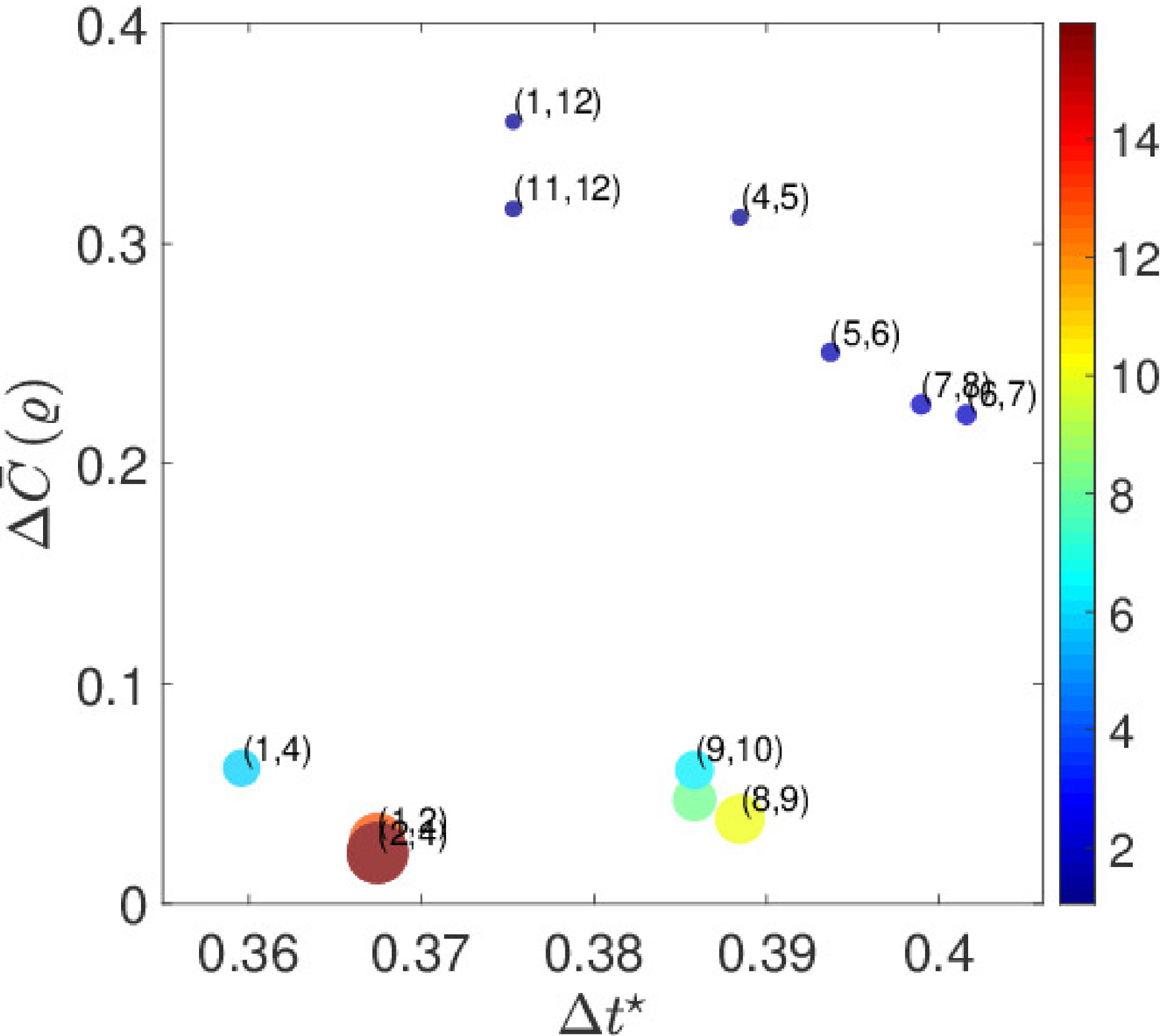}
			\par\end{centering}
	}
	
	\subfloat[]{\begin{centering}
			\includegraphics[width=0.5\textwidth]{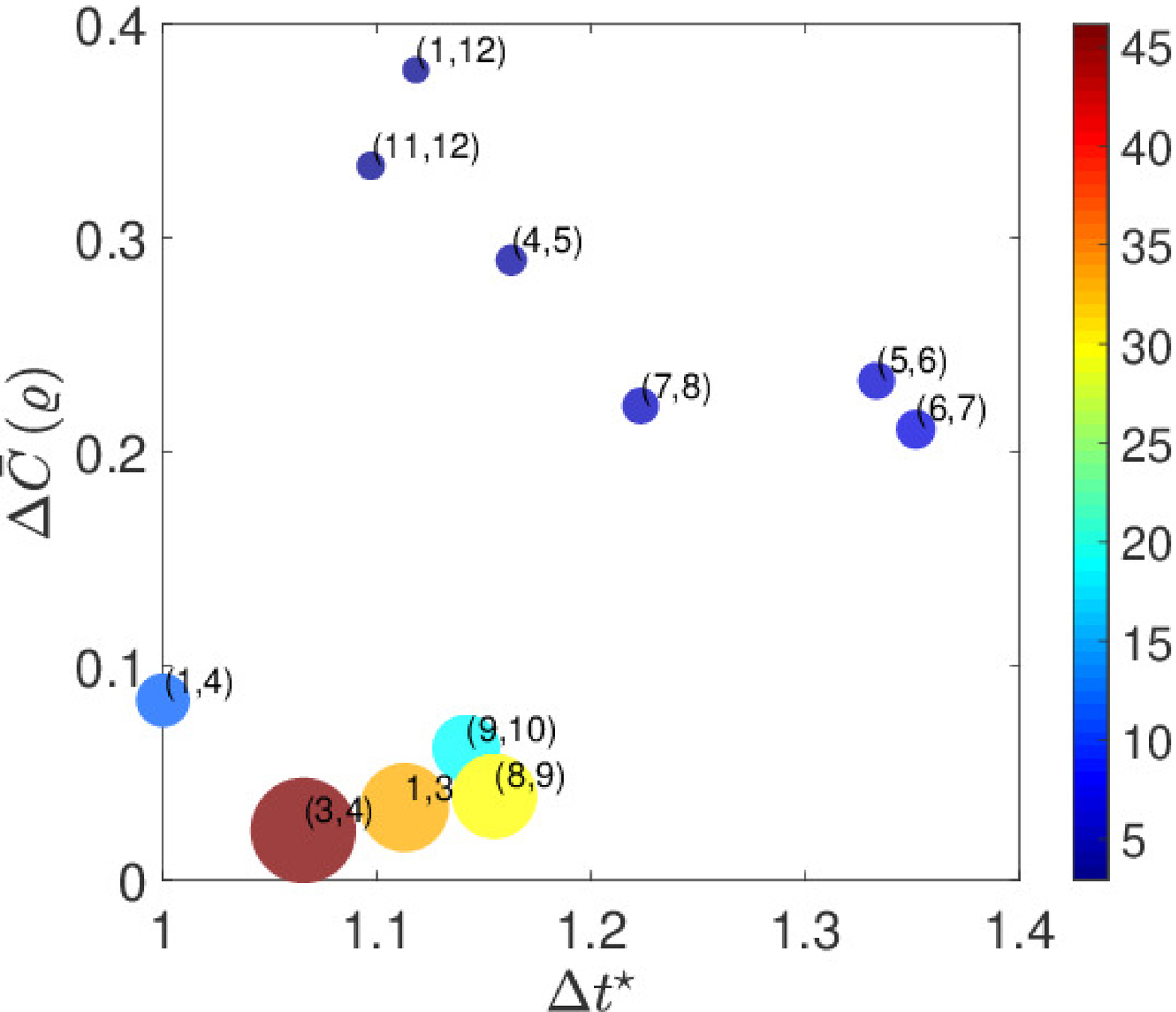}
			\par\end{centering}
	}\subfloat[]{\begin{centering}
			\includegraphics[width=0.45\textwidth]{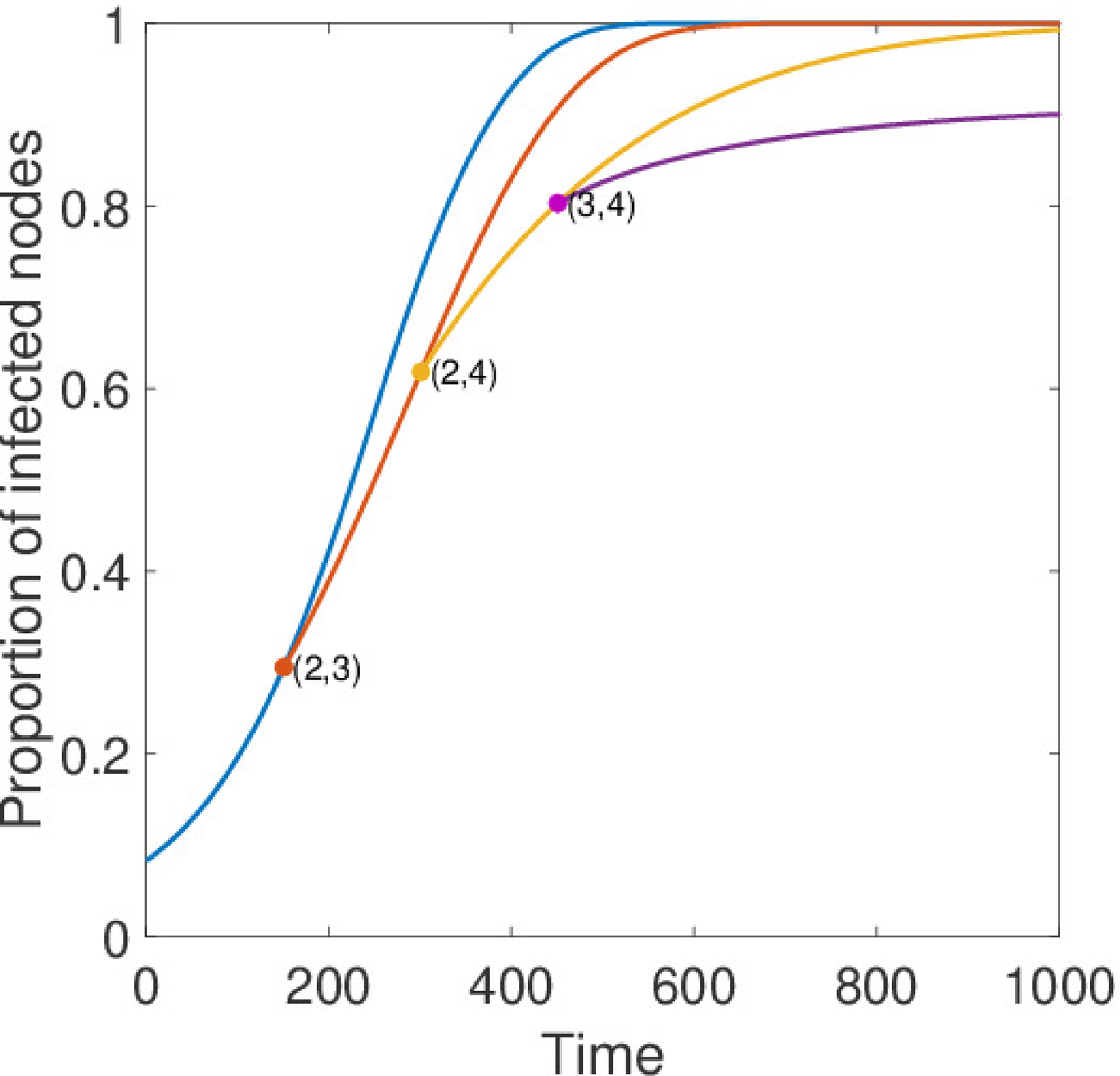}
			\par\end{centering}
	}
	\caption{Plots $\varDelta\bar{\mathscr{C}}\left(\varGamma -e,\varrho\right)$ vs. $\varDelta t^{\star}$: the radii and color of the points are proportional to the values of $\mathcal{O}(e,t)$ at different time units, specifically at a) $t=150$, b) $t=300$, c) $t=450$; figure d) shows the trajectory of the infection curve as a function of time: each marked point refers to the removal of the corresponding edge.}
	\label{Toy_plots}
\end{figure}

\section{Analysis of a real-life situation: the UK airports network}

Here we consider the network of domestic flights between 44 commercial
airports in the United Kingdom in the year 2003. This was the year
in which the SARS epidemic was spreading across the world. The network
consists of 220 weighted edges representing air internal routes between
these 44 airports. The weights correspond to the number of passengers
transported during that year between the corresponding airports. In
Fig. \ref{UK_network} we show a representation of this network where
the nodes are drawn with size and colors proportional to their
weighted degrees--total number of passengers arriving/departing to/from
that airport in 2003. The weighted degree $w_{i}$ and the ``standard''
degree $k_{i}$, i.e., number of edges incident to the node $i$,
are related to each other by means of a power-law relation: $w\approx e^{10.46}k^{1.419}$
with Pearson correlation coefficient $r=0.631$. More importantly,
rank correlation indicates that both indices rank the airports in
very different ways. For instance, the Kendall $\tau$ coefficient
between both indices is only $\tau\approx0.626$. According to $w$
the most ``important'' airport was Heathrow, which was visited this
year by 6,176,092 passengers, representing 13.8\% of all passengers
traveling this year across the U.K. However, the degree of this node
is only 9, possibly indicating its major role as an international
hub and connecting only to those relevant national airports from which
passengers can easily move to other places. On the other side of the
coin we have the airports of Jersey and Aberdeen with degree 24
and 23, respectively. Each of them moved less than 3\% of the total
number of passengers in the U.K. this year. However, Jersey is a major
touristic destination in the U.K. and Aberdeen has become an important
work hub in the U.K. due to the oil industry. Thus, they receive flights
from many different cities across the U.K., although the number of
passengers is relatively low.

\begin{figure}[!htbp]
	\begin{centering}
		\includegraphics[width=0.50\textwidth]{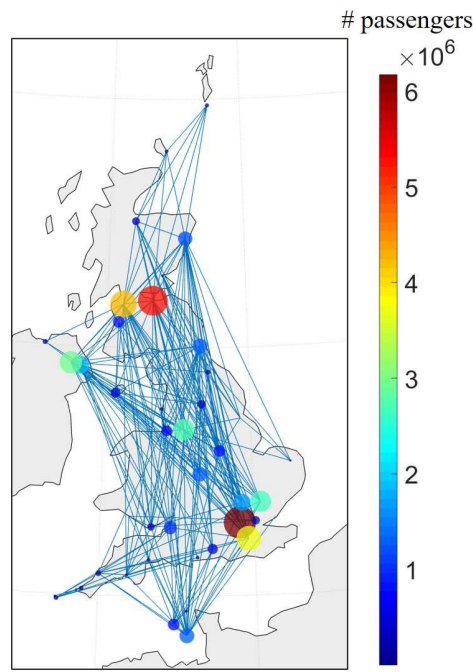}
		\par\end{centering}
	\caption{Network of air transport of passengers in the U.K. The size and color
		of the nodes is proportional to the number of passengers transported
		from/to that airport. Two airports are connected if there is at least
		one flight between the two.}
	
	\label{UK_network}
\end{figure}

Due to these differences between the weighted and unweighted degrees, we
consider here the analysis of both versions of the U.K. airport transportation
network for the propagation of a SIS disease and the implementation
of edge-removal strategies. We apply the edge-removal strategy described
in this work for removals at $t=200,400,600,800,1000$. In the case
of the unweighted network, these removals correspond to the connections
between the following pairs of airports (in order): Glasgow-Manchester;
Belfast City-Manchester; Stansted-Edinburgh; Isle of Man-Manchester
and Bristol-Guernsey. In total, the removal of these 5 connections
increases the cumulative time for infecting the whole network - precisely to overcome the epidemic level of 90\% of infected nodes - by 161.9\% with a decrease of 0.25\% in the capacity of the network to reroute goods/items/passengers.

\begin{table}[H]
	
	\begin{centering}
		\begin{tabular}{|c|c|c|c|c|}
			\hline 
			time & Route & $\varDelta t^{\star}$ (\%) & $\varDelta \bar{{\mathscr C}}$ (\%) & $\varDelta {\bar{l}}$  (\%)\tabularnewline
			\hline 
			\hline 
			200 & Glasgow-Manchester & 9.52 & 0.026 & 0.048\tabularnewline
			\hline 
			400 & Belfast C.-Manchester & 23.81 & 0.068 & 0.101\tabularnewline
			\hline 
			600 & Stansted-Edinburgh & 47.62 & 0.085 & 0.152\tabularnewline
			\hline 
			800 & Isle of Man-Manchester & 89.95 & 0.189 & 0.203\tabularnewline
			\hline 
			1000 & Bristol-Guernsey & 161.90 & 0.247 & 0.254\tabularnewline
			\hline 
		\end{tabular}\caption{Quantitative results of the edge-removal strategy using the unweighted
			version of the U.K. air transportation network.}
		\par\end{centering}
\end{table}

In contrast, the consideration of the number of passengers between
the different airports produces a completely different picture. First,
we need a normalization of the weighted adjacency matrix to make the
edge weights comparable to those of the unweighted version. This is
carried out by dividing the weighted adjacency matrix with the mean
value of the edge weights in the network. In this way, both the unweighted
and the normalized weighted adjacency matrices have the same mean.
Using this strategy the order of removals is as follows: Heathrow-Edinburgh;
Heathrow-Manchester; Heathrow-Glasgow and Heathrow-Belfast City. The
fifth removal is not carried out as it is not necessary to drop to
probability of infecting the whole network below 90\%, which was the
target of the experiment. The evolution of the mean probability that an
airport gets infected at three different times is illustrated in Fig.
\ref{airports_probability_evolution}.

\begin{figure}[H]
	\begin{centering}
		\includegraphics[width=0.9\textwidth]{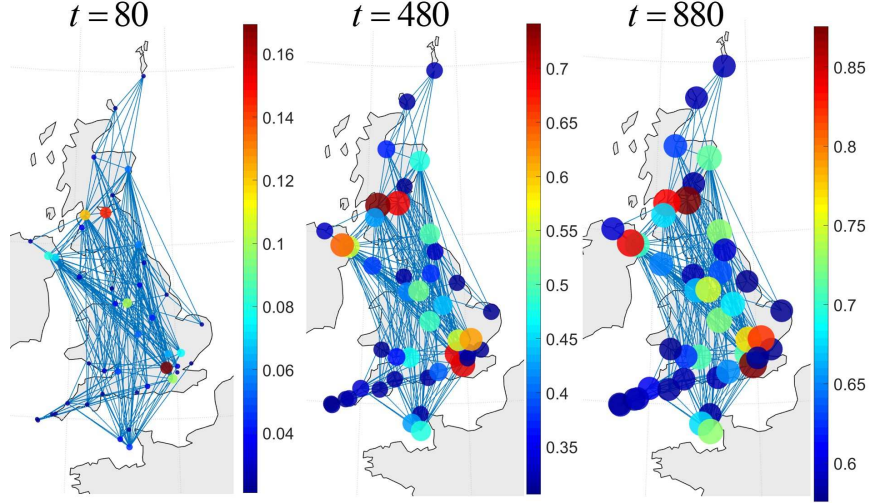}
		\par\end{centering}
	\caption{Illustration of the evolution of the probability of getting infected
		in the U.K. airports network at an initial (no edge removal), intermediate
		(after removals of Heathrow-Edinburgh and Heathrow-Manchester) and
		advanced (after all four edge removals) times. The radii and colors
		of the nodes are proportional to the probability of getting infected
		which is illustrated in colorbars.}
	\label{airports_probability_evolution}
\end{figure}

By removing all the flights between Heathrow and Edinburgh we delay
the time for infecting the whole network by 38.9\%. The second removal
increases this time to 88.9\% and the third one increases it up to
194.4\%. Finally, the cumulative removal of 4 connections increases
the time to infect the whole network by 333.3\%. How the capacity
of the airport network to reroute goods/items/passengers has changed
after these removals? The response is surprising! The remove of all
flights between Heathrow and Edinburgh does not drop the capacity
of the global network to diffuse goods/items/passengers and passengers through
its nodes. In contrast, it increases this capacity by 9.9\%. This
is, of course, a consequence of considering that such goods/items/passengers
move in the network in a completely diffusive way.
If we consider that they move using the shortest paths, then we observe
a drop in the capacity of the network equal to $\varDelta {\bar{l}}=0.05\%$, i.e., the increase
in the average shortest path length in the network after the removal.
After the four removals previously described the network has increased
its diffusive capacity by 3.6\% with a drop in its capacity to deliver
goods/items/passengers via shortest paths of 0.4\%. In either way,
the removal of these four inter-airport connections produces a remarkable
delay on the propagation of the SIS disease in comparison with a very
small affection of the network operative capacity.

\begin{table}[!htbp]
	\centering{}
	\begin{tabular}{|c|c|c|c|c|}
		\hline 
		time & Route & $\varDelta t^{\star}$ (\%) & $\varDelta {\bar{\mathscr C}}$ (\%) &  $\varDelta {\bar{l}}$ (\%)\tabularnewline
		\hline 
		\hline 
		200 & Heathrow-Edinburgh & 38.89 & -9.93 & 0.049\tabularnewline
		\hline 
		400 & Heathrow-Manchester & 88.89 & -9.34 & 0.152\tabularnewline
		\hline 
		600 & Heathrow-Glasgow & 194.44 & -5.61 & 0.254\tabularnewline
		\hline 
		800 & Heathrow-Belfast C. & 333.33 & -3.60 & 0.407\tabularnewline
		\hline 
	\end{tabular}\caption{Quantitative results of the edge-removal strategy using the passengers-weighed
		version of the U.K. air transportation network with adjacency matrix
		normalized by the mean number of passengers in the network.}
\end{table}

\begin{figure}[!htbp]
	\begin{centering}
		\subfloat[]{\begin{centering}
				\includegraphics[width=0.45\textwidth]{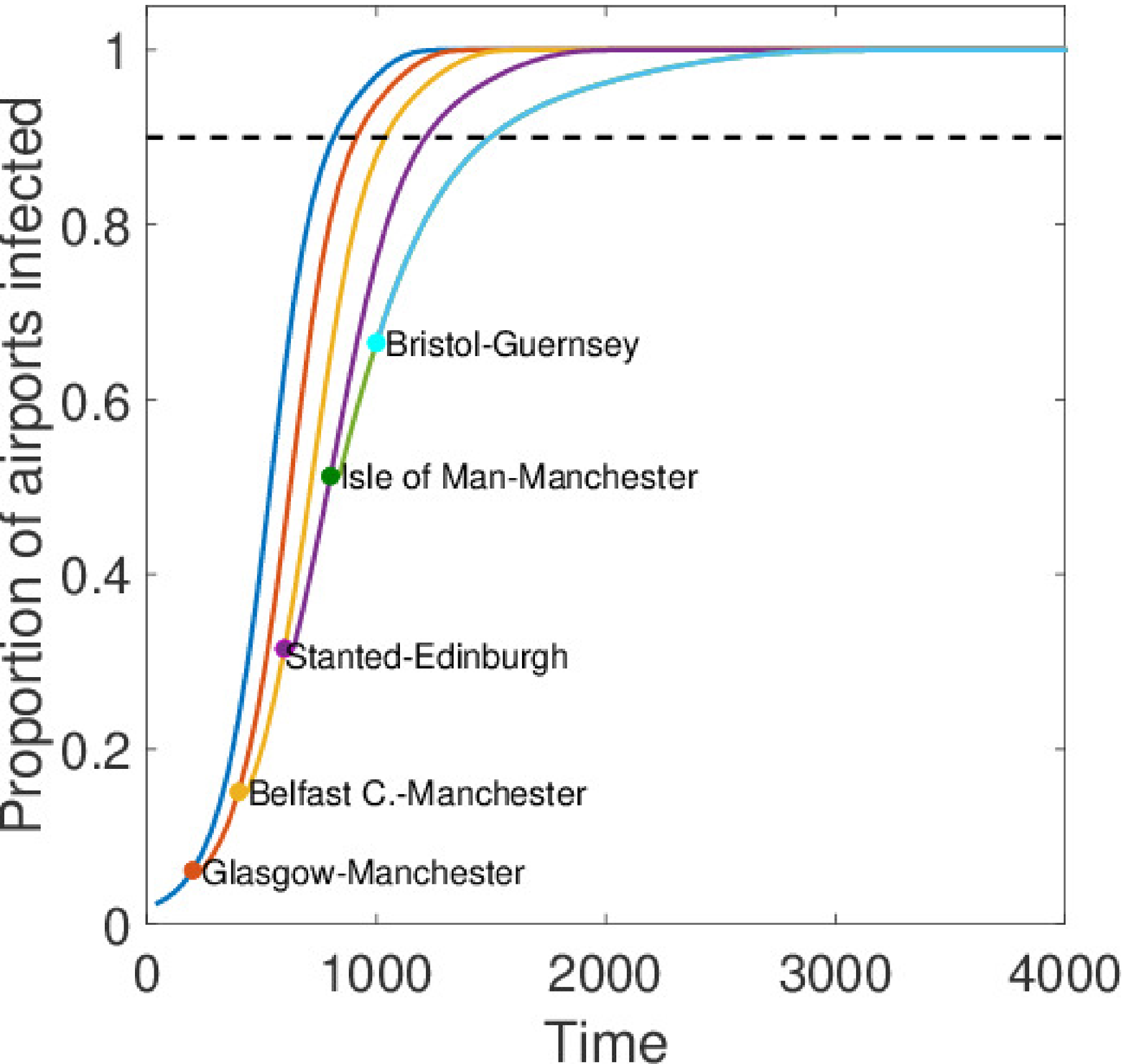}
				\par\end{centering}
			
		}\subfloat[]{\begin{centering}
				\includegraphics[width=0.45\textwidth]{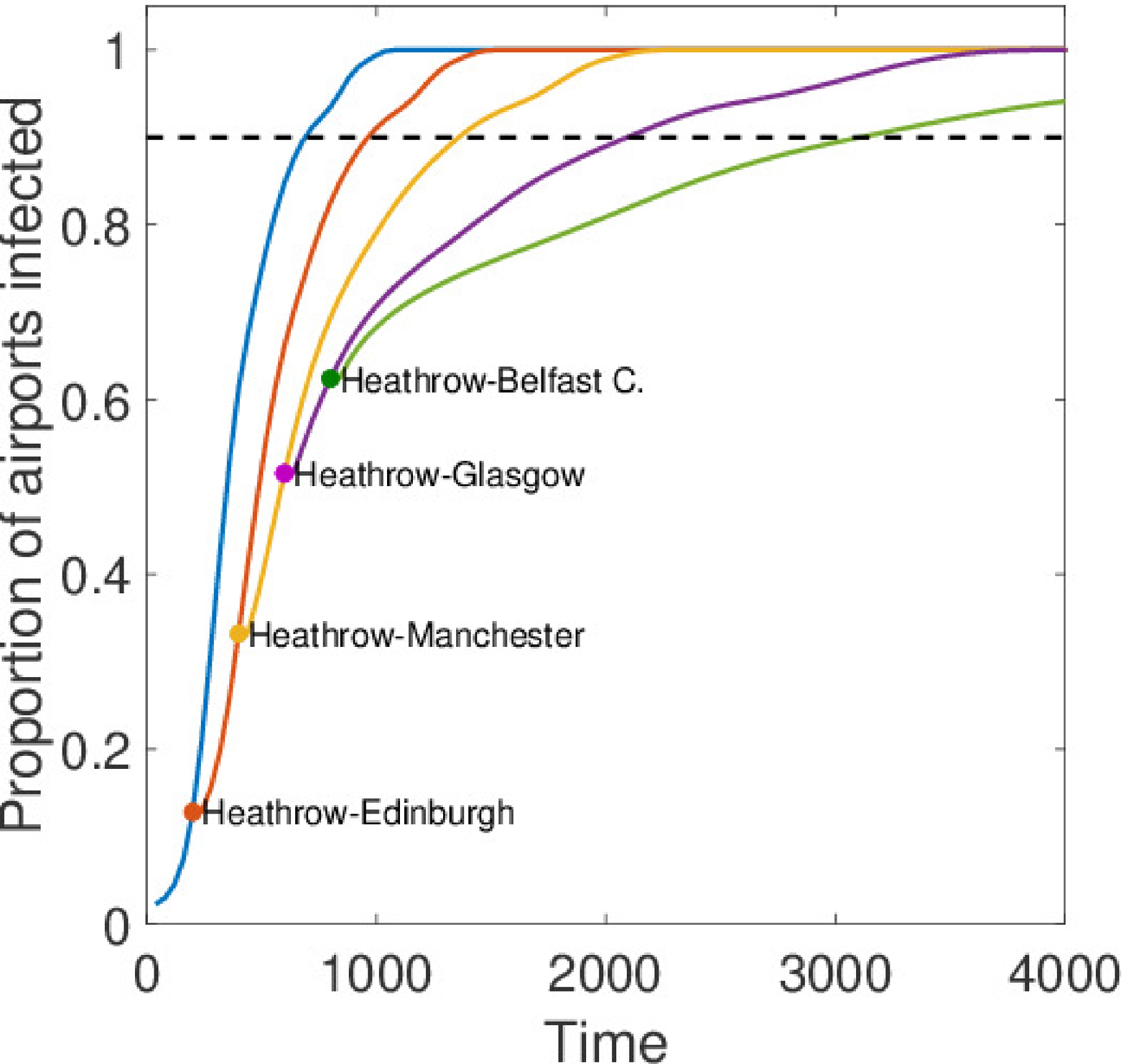}
				\par\end{centering}
		}
		\par\end{centering}
	\caption{Time evolution of the proportion of infected airports subjected to
		edge-removal strategies in (a) the unweighted version of the network,
		and (b) the passengers-weighed network with adjacency matrix normalized
		by the mean number of passengers.}
	
\end{figure}

We have also conducted experiments to show whether the previous results
are dependent on the normalization scheme used for the weighted adjacency
matrix. In this case we use two other normalization schemes, namely
by dividing the weighted adjacency matrix by the maximum edge weight
or by dividing it by the total sum of edge weights. In both cases
the edges identified to be removed are the same as for the case of
normalizing by the mean weight with the addition of a fifth edge
to be removed, which corresponds to Belfast Int.-Liverpool. The time
to infect the whole network increases by 286.4\% and by 309.5\% after
the fifth removal using the two additional schemes of normalization, respectively.
All in all, these experiments show that the results previously described
using the mean-weight normalization of the adjacency matrix are not
specific of this kind of normalization and stressed the importance
of using passengers-weighted version of the airport networks.

\section{Conclusions}

We have developed an approximate solution to the SIS epidemiological
model which represents an upper bound to the exact solution of that
model. This upper bound has several important features: (i) it does
not diverge as the linearized SIS model; (ii) it represents a worse-case
scenario for the propagation of a SIS disease; (iii) its solution
can be expressed in terms of the communicability function, allowing
clear structure-dynamic relations. Using this model and its connection
with the communicability function, we proposed here a general strategy
for mitigating the effects of a disease propagation on a network with
minimum disruption of network capacities to reroute goods/items/passengers.
This strategy consists in removing some connections which are found
to delay the propagation of a disease on the network but minimally
altering the capacity of the network to diffuse items among its nodes
or to reroute them by alternative shortest paths. As a proof of concept,
we have studied the airport transportation network of U.K. in 2003,
where the nodes represent airports and the edges represent the flight
connections, weighed by the number of passengers transported this
year, between them. We have shown that using the strategy proposed
in this work, the removal of only 4 origin-destination pairs in a
time-dependent way delays the propagation of an epidemic by more than
330\% relative to the original network. This delay represent a very
significant gain in time for preparations of health systems and non-pharmaceutical
interventions to confront such epidemic. In addition, these removals
alter minimally the capacity of the U.K. airport system to transport
goods/items/passengers either in diffusive ways or via shortest-paths
routing.

Of course, the global spread of a virus, like SARS-Cov-2, is a multi-scale phenomenon and cannot be faced by any individual oversimplified strategy but it requires action at various different levels.\cite{Cirillo2020, Fatehi2021, Pitcher2018} It is worth mentioning here the wider scopes of our intentions, not only related to epidemiology and virus propagation. The proposed methodology is flexible enough to adapt to very different contexts, like for instance dissemination of information, risk assessment in finance, and beyond.

Indeed, the main emphasis of the current work has been on the mathematical, methodological side. We expect that the extension of this approach to other epidemiological models allow more realistic implementations to tackle this important kind of non-pharmaceutical interventions that mitigate the effects of epidemics in the future.

\section*{Acknowledgments}
%\selectlanguage{american}%

The author thanks financial support from Ministerio de Ciencia, Innovacion y Universidades, Spain for the grant PID2019-107603GB-I00 ''Hubs-repelling/ attracting Laplacian operators and related dynamics on graphs/networks''.

%\selectlanguage{english}
\bibliographystyle{ws-m3as}
\bibliography{references}

\end{document}